\documentclass[a4paper,11pt]{article}

\usepackage{amsmath,amssymb}

\usepackage{graphicx}
\usepackage{latexsym}
\usepackage{makeidx}  


\setlength{\headheight}{0mm}
\setlength{\oddsidemargin}{-0mm}
\setlength{\topmargin}{-15mm}        
\setlength{\textwidth}{155mm}
\setlength{\textheight}{235mm}

\newcommand{\nop}[1]{}



\def\cal{\mathcal}

\def\noi{\noindent}

\def\med{\medskip}

\newcommand\vsp{\vspace*{-0.3cm}}
\newcommand{\qed}{\hfill $\Box$}


\newenvironment{proof}{
\noindent{\bf\scshape Proof.}}
{\medskip
{\hfill $\Box$\smallskip}}  

\newtheorem{theorem}{Theorem}[section]{\bfseries}{\itshape}
\newtheorem{lemma}[theorem]{Lemma}{\bfseries}{\itshape}        
\newtheorem{corollary}[theorem]{Corollary}{\bfseries}{\itshape}     
\newtheorem{claim}[theorem]{Claim}{\bfseries}{\itshape}
{\bfseries}{\itshape}
\newtheorem{definition}[theorem]{Definition}{\bfseries}{\itshape}
\newtheorem{example}[theorem]{Example}{\bfseries}{\itshape}
\newtheorem{remark}[theorem]{Remark}{\bfseries}{\itshape}


\begin{document}

\title{Posimodular Function Optimization}

\author{
Toshimasa Ishii\thanks{Graduate School of Economics,
Hokkaido University,
Sapporo 060-0809, Japan 
({\tt ishii@econ.hokudai.ac.jp})} 
\and 
Kazuhisa Makino\thanks{Research Institute for Mathematical Sciences, 
Kyoto University, Kyoto 606-8502, Japan
({\tt makino@kurims.kyoto-u.ac.jp})}}

\nop{
\institute{Toshimasa Ishii \at
Graduate School of Economics,
Hokkaido University,
Sapporo 060-0809, Japan\\
\email{ishii@econ.hokudai.ac.jp} 
\and 
Kazuhisa Makino \at
Graduate School of Information Science and Technology, University of
Tokyo,
Tokyo 113-8656, Japan \\
\email{makino@mist.i.u-tokyo.ac.jp}}
}

\date{\empty}

\maketitle

\begin{abstract}
Given a posimodular function $f: 2^V \to \mathbb{R}$ on a finite set
 $V$,
we consider the problem of finding a nonempty subset $X$ of $V$ that
minimizes $f(X)$.
Posimodular functions often arise  in
 combinatorial
optimization such as undirected cut functions.
In this paper, we show that any algorithm for the problem requires 
  $\Omega(2^{\frac{n}{7.54}})$ oracle calls to $f$, where $n=|V|$.
It contrasts to the fact that the submodular function minimization, 
which is another generalization of cut functions, 
is polynomially solvable.

When  the range of a given posimodular function is  restricted to be $D=\{0,1,\ldots,d\}$ for some nonnegative integer $d$,
we show that  $\Omega(2^{\frac{d}{15.08}})$ oracle calls
are necessary, 
while we 
 propose
an $O(n^dT_f+n^{2d+1})$-time algorithm for the problem. 
Here,  $T_f$ denotes the time needed to evaluate the function value $f(X)$ for 
a given $X
\subseteq V$. 

We also consider the problem of maximizing a given posimodular
 function.
We 
 show that $\Omega(2^{n-1})$ oracle calls are necessary for
solving the  problem, 
and that the problem has
 time complexity  $\Theta(n^{d-1}T_f) $ when 
 $D=\{0,1,\ldots, d\}$ is the range of $f$ for 
some constant $d$. 
\end{abstract}

%

\med 

{\small
{\bf Keyword: }
Posimodular function,
Algorithm,
Horn CNF,
Extreme sets
}

\vsp

\section{Introduction}\label{intro-sec}
\setcounter{equation}{0}
Let  $V$ denote a finite set with $n=|V|$. A set function $f: 2^V \to \mathbb{R}$
is called {\em posimodular} if
\begin{equation}\label{posi:eq}
 f(X)+f(Y)\geq f(X \setminus Y)+f(Y \setminus X) 
\end{equation}
 for all $X,Y \subseteq V$,
where $\mathbb{R}$ denotes the set of all  reals.
Posimodularity is one of the most fundamental and important properties in combinatorial optimization
 \cite{fuji00laminar,TM10posi,naga10mindegree,NI00poly,NSI01aug,SMNF09}.
Typically, it is a key for efficient solvability of  
undirected  network optimization and the related problems, 
since cut functions for undirected networks are posimodular.
Note that  cut functions for directed networks are not posimodular.
We can observe that posimodularity helps to create
complexity gaps for a number of network optimization problems, in the sense that 
the undirected versions  can be solved faster than
the directed versions.
For example,  the local edge-connectivity augmentation problem
in undirected networks is polynomially solvable, 
but  the problem in directed networks is NP-hard \cite{frank}.
As for the source location problem
with uniform demands or with uniform costs, 
the undirected versions can be solved in polynomial time
\cite{AIMF02locating,tamura2},
while the directed versions are NP-hard \cite{IMAHIF03}.
More generally, the currently fastest algorithm for 
minimizing a submodular and posimodular function
achieves $O(n^3T_f)$ time \cite{ni98note},
while the one for minimizing a submodular function
achieves $O(n^5T_f+n^6)$ time \cite{orlin09faster}, where
a set function $f: 2^V \to \mathbb{R}$
is called {\em submodular} if
\begin{equation}\label{sub:eq}
 f(X)+f(Y)\geq f(X \cap Y)+f(X \cup Y) 
\end{equation}
 for all $X,Y \subseteq V$, and 
 $T_f$ denotes the time needed to evaluate the function value $f(X)$ for 
a given $X
\subseteq V$. 
One of the reasons for these phenomena is based on  the following two structural properties on posimodular functions. 

A subset $X$ of $V$ is called {\em extreme} if
every nonempty proper subset $Y$ of $X$ satisfies $f(Y)> f(X)$. 
It is known that  the family ${\cal X}(f)$ of  extreme sets is {\em laminar} 
(i.e.,  
 every two members $X$ and $Y$ in  ${\cal X}(f)$  satisfy $X \cap Y=\emptyset$, $X\subseteq
Y$, or $X \supseteq Y$), when  $f$ is posimodular. 
Note that if $X,Y \in {\cal X}(f)$ would satisfy
$X \cap Y, X\setminus Y, Y\setminus X\neq \emptyset$,
then we have $f(X)+f(Y)\geq f(X \setminus Y)+f(Y \setminus X)> f(X)+f(Y)$, 
a contradiction.
The family ${\cal X}(f)$ of extreme sets
for an undirected cut function $f$ 
represents
the connectivity structure of a given network and helps to
design many efficient network algorithms
\cite{lawler73cutsets,WN87}.
For example, the undirected source location problem
with uniform demands can be solved in $O(n)$ time, if the
family  ${\cal X}(f)$ is known in advance, 
where $n$ corresponds to the number of vertices in the network \cite{naga04graph}. 
In fact, ${\cal X}(f)$ can be computed  in $O(n(m+n \log n))$ time for any undirected cut function  \cite{naga04graph}, 
where $m$ denotes the number of edges in the network. 
We note that ${\cal X}(f)$ can be found in $O(n^3T_f)$ time if $f$ is posimodular and submodular \cite{naga10mindegree}. 

The other structural property is for solid sets.
For an element $v \in V$,
 a subset $X$ of $V$ is called  {\em $v$-solid set}
if $v \in X$ and
every nonempty proper subset $Y$ of $X$ that contains  $v$
satisfies $f(Y)> f(X)$. 
Let ${\cal S}(f)$ denote the family of all solid sets, i.e.,  
${\cal S}(f) =\bigcup_{v \in V}\{ v$-solid $X\}$. 
It is known   \cite{SMNF09} that the
family  ${\cal S}(f)$ forms a tree hypergraph if $f$ is posimodular.
Similarly to the previous case for  ${\cal X}(f)$, 
if a host tree $T$ of ${\cal S}(f)$ is known in advance, 
this structure enables us to construct
a polynomial time algorithm for
the minimum transversal problem for posimodular functions $f$,  
which is 
an extension of the undirected source location problem
with uniform costs 
\cite{tamura2}
and the undirected external network problem \cite{heuvel05external}.  
If $f$ is in addition submodular,  a host tree $T$ can be computed in polynomial time. 

We here remark that 
these structural  properties on 
${\cal X}(f)$ and ${\cal S}(f)$ follow from the posimodularity of $f$, 
and that the submodularity is  needed to derive such structures efficiently, 
more precisely, 
the submodularity is assumed due to the property that 
$\min\{f(X)\mid \emptyset \neq X \subseteq V\}$ can be
computed in polynomial time.
%

On the other hand, to our best knowledge,
all the previous results for the posimodular optimization also make use of the submodularity or symmetricity, 
since undirected cut functions,   the most representative posimodular functions, are also submodular and symmetric.
Here a set function $f: 2^V \to \mathbb{R}$ is called {\em symmetric} if
$f(X)=f(V\setminus X)$ holds for any $X \subseteq V$.   
We note that a function is symmetric posimodular if and only if it is symmetric submodular, 
since the symmetricity of $f$ implies that 
$f(X)+f(Y)=f(V \setminus X)+f(Y)$ and
$f(X\setminus Y)+f(Y \setminus X)=f((V\setminus X)\cup Y)+f((V\setminus X)\cap Y)$.


In this paper, we focus on the posimodular  function minimization defined as follows. 
\begin{equation}\label{posi:prob}
\begin{array}{ll}
\multicolumn{2}{l}{\hspace*{-.10cm} 
{\mbox {\sc Posimodular Function Minimization}}}
\\[.08cm] 
\hspace*{-.10cm}
\mbox{Input:}& \mbox{A posimodular function}\,\,f: 2^V \to \mathbb{R},
 \\[.08cm]
\hspace*{-.10cm}
\mbox{Output:}& \mbox{A nonempty subset }X^* \mbox{ of $V$ such that }f(X^*)=\min_{X \subseteq V: X\not=\emptyset} f(X). 
\end{array} 
\end{equation}

\noindent
Here an input $f$ is given by an oracle that answers $f(X)$ for a given subset $X$ of $V$, and we assume that the optimal value $f(X^*)$ is also output. 
\noindent
The problem was posed as an open problem
on the Egres open problem
list \cite{egresopen}
in 2010,
as the negamodular function maximization,
where a set function $f$ is {\em negamodular}, if $-f$ is posimodular.
We also consider the posimodular function maximization, as the submodular function maximization has been intensively studied in recent years.

\subsection*{Our Contributions}

The main results obtained in this paper can be summarized as follows.

\begin{enumerate}
 \item  We show that   any  algorithm for the posimodular function minimization requires 
 $\Omega(2^{\frac{n}{7.54}})$ oracle calls.

 \item For a nonnegative integer $d$, let $D=\{0,1,\ldots, d\}$ denote the range of $f$, i.e., 
 $f: 2^V \to D$. 
Then we show that   $\Omega(2^{\frac{d}{15.08}})$ oracle calls
are necessary for 
the posimodular function minimization, 
while we 
 propose
an $O(n^dT_f+n^{2d+1})$-time algorithm for the problem. 
Also, as its byproduct, 
 the family ${\cal X}(f)$ of all extreme sets can be computed in 
$O(n^{d}T_f+n^{2d+1})$ time.
Furthermore, we show that all optimal solutions
can be generated with $O(nT_f)$ delay after 
generating
all locally minimal optimal solutions in $O(n^{d}T_f+n^{2d+1})$ time.

\item We show that the posimodular function maximization
requires 
 $\Omega(2^{n-1})$ oracle calls,
and that the problem has
 time complexity  $\Theta(n^{d-1}T_f) $ when 
 $D=\{0,1,\ldots, d\}$ is the range of $f$ 
for some constant $d$. 
\end{enumerate}

The first result contrasts to the submodular function minimization, which can be solved in polynomial time, and the second result implies the polynomiality for the posimodular function minimization if the range is bounded. The last result shows that the posimodular function maximization is also intractable.

The rest of this paper is organized as follows. 
 Section~\ref{preliminaries-sec} 
 presents basic definitions and preparatory properties on 
posimodular functions.  
In Section~\ref{hardness-sec}, 
we show the 
hardness results for the posimodular function minimization. 
In Section~\ref{dconst-sec}, 
we propose an $O(n^dT_f+n^{2d+1})$-time
algorithm for the posimodular function minimization
when $D$ is the range of $f$.  We also consider the problems for computing all extreme sets and all optimal solutions.  
Section~\ref{max-sec} treats the posimodular function maximization. 


\section{Preliminaries}\label{preliminaries-sec}
\setcounter{equation}{0}

Let $V$ be a finite set with  $n=|V|$.
For two subsets $X,Y$ of $V$, we say that {\em 
$X$ and $Y$ intersect each other} if each of $X\setminus Y$, $Y
\setminus
X$, and
$X\cap Y$ is nonempty. 
\nop{
A set function $f: 2^V \to \mathbb{R}$ is called {\em posimodular}
if 
\begin{equation}\label{posi:eq}
 f(X)+f(Y)\geq f(X \setminus Y)+f(Y \setminus X) 
\end{equation}
for every two subsets $X,Y$ of $V$.}
Let  $f: 2^V \to \mathbb{R}$ be a posimodular function.
Notice that any posimodular function $f$ satisfies
\begin{equation}
\label{eq--1}
f(X)\geq f(\emptyset) \mbox{ for all } X \subseteq V, 
\end{equation}
 since $f(X)+f(X)\geq f(\emptyset)+f(\emptyset)$.
Throughout the paper, we assume that $f(\emptyset)=0$, since otherwise,  
we can replace $f(X)$ by $f(X)-f(\emptyset)$ for all $X\subseteq V$. 

We here show a preparatory lemma for posimodular functions. 

\nop{

\begin{lemma}\label{max':lem}
Let  $f: 2^V \to \mathbb{R}$ be a posimodular function, and 
let $X$, $Y$, and $Z$ be three pairwise disjoint subsets of $V$. 
\begin{description}
\setlength{\itemindent}{-0.7cm}
\setlength{\parskip}{0cm} 
\item[$(${\it i}$)$]  If $f(X)\geq f(X \cup Y)$, then $f(Y \cup Z)\geq f(Z)$.
\item[$(${\it ii}$)$]  If $f(X)\geq f(Y\cup Z)$,
then   $f(X\cup Y)\geq f(Z)$. 
\end{description}
\end{lemma}
\begin{proof}
By (\ref{posi:eq}), we have
$f(X \cup Y)+f(Y \cup Z)\geq f(X)+f(Z)$, which proves (i) and (ii).
\end{proof}

\noindent
The next properties follow from Lemma~\ref{max':lem}.
}

\begin{lemma}\label{max:lem}
For  a posimodular function  $f: 2^V \to \mathbb{R}$,  
let  $T$ be a  subset of $V$ with
 $f(T)=\max\{f(X)\mid X \subseteq V\}$. 
For a  nonempty proper subset $U$ of $V$, the following two properties hold.
\begin{description}
\setlength{\itemindent}{-0.7cm}
\setlength{\parskip}{0cm} 
\item[$(${\it i}$)$]  If  $U \cap T=\emptyset$, then we have  $f(U)\geq f(\{v\})$ for any $ v \in U$. 
\item[$(${\it ii}$)$]  If  $U \supseteq T$, then we have  $f(U)\geq f(\{v\})$ for any $ v \not\in U$. 
\end{description}
\end{lemma}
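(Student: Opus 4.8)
The plan is to derive both inequalities from a single application of posimodularity (\ref{posi:eq}) to a carefully chosen pair of sets, together with the defining property $f(T)=\max\{f(X)\mid X\subseteq V\}$. The guiding principle is to pick the two arguments $X,Y$ of (\ref{posi:eq}) so that the two difference terms $X\setminus Y$ and $Y\setminus X$ collapse to exactly $T$ and the singleton $\{v\}$; then the term $f(T)$ produced on the right-hand side can be absorbed using maximality, leaving the desired comparison $f(U)\ge f(\{v\})$.

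For part (i), where $U\cap T=\emptyset$ and $v\in U$, I would apply (\ref{posi:eq}) to the pair $X=T\cup(U\setminus\{v\})$ and $Y=U$. Because $T$ is disjoint from $U$, a direct computation gives $X\setminus Y=T$ and $Y\setminus X=\{v\}$, so (\ref{posi:eq}) yields $f(T\cup(U\setminus\{v\}))+f(U)\ge f(T)+f(\{v\})$. Since $f(T)$ is the maximum value, we have $f(T)\ge f(T\cup(U\setminus\{v\}))$, and combining the two inequalities cancels the matching terms and leaves $f(U)\ge f(\{v\})$.

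For part (ii), where $T\subseteq U$ and $v\notin U$, the symmetric choice works: I would apply (\ref{posi:eq}) to $X=U$ and $Y=(U\setminus T)\cup\{v\}$. Here $v\notin U$ and $T\subseteq U$ force $X\setminus Y=T$ and $Y\setminus X=\{v\}$, so (\ref{posi:eq}) gives $f(U)+f((U\setminus T)\cup\{v\})\ge f(T)+f(\{v\})$, and again maximality $f(T)\ge f((U\setminus T)\cup\{v\})$ cancels the extra term and yields $f(U)\ge f(\{v\})$.

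The arithmetic is routine once the sets are fixed, so the only real obstacle is identifying the right pair $X,Y$ whose differences are precisely $T$ and $\{v\}$; the hypotheses ($U\cap T=\emptyset$ in (i), and $T\subseteq U$ together with $v\notin U$ in (ii)) are exactly what make these differences collapse, and it is worth verifying each set identity carefully. I would finally check the degenerate cases $U=\{v\}$ in (i) and $U=T$ in (ii), where the auxiliary set $U\setminus\{v\}$ or $U\setminus T$ becomes empty; in each the inequality reduces to $f(\{v\})\ge f(\{v\})$ or to $f(T)\ge f(\{v\})$ and holds trivially by maximality.
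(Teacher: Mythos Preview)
Your proof is correct and uses exactly the same posimodularity application as the paper: for (i) the pair $\{U,\; T\cup(U\setminus\{v\})\}$ and for (ii) the pair $\{U,\;(U\setminus T)\cup\{v\}\}$, followed by cancellation via the maximality of $f(T)$. The paper additionally disposes of the boundary cases $T=V$ and $T=\emptyset$ at the outset, but your argument goes through unchanged in those cases as well, so this separation is not essential.
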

\begin{proof}
If $T=V$, then the statements $({\it i})$ and $({\it ii})$ of the lemma clearly hold, 
since no nonempty proper subset $U$ of $V$ satisfies $U \cap T=\emptyset$
or $U \supseteq T$. On the other hand, if $T=\emptyset$, 
then we have 
$f(X)=0$ for all $X$ by (\ref{eq--1}) and the assumption on $f$. 
This again implies the statements of the lemma. 
We therefore assume that $T$ is a nonempty proper subset of $V$. 

For a nonempty subset $U$ with $U \cap T =\emptyset$, let $v$ be an element in $U$.
Then by (\ref{posi:eq}), we have $f(U)+f(T \cup (U \setminus \{v\})) \geq f(T) +f(\{v\})$. 
Since $T$ is a maximizer of $f$,  $f(U)  \geq f(\{v\}) $ holds, which proves  $({\it i})$ of the lemma. 
For a proper subset $U$ with $U \supseteq T$, let $v$ be an element in $V\setminus U$.
Then by (\ref{posi:eq}), we have $f(U)+f( (U \setminus T) \cup \{v\}) \geq f(T) +f(\{v\})$. 
Since $T$ is a maximizer of $f$,  $f(U)  \geq f(\{v\}) $ holds, which proves $({\it ii})$ of the lemma. 
\end{proof}


In this paper, we sometimes utilize a  Boolean function
 $\varphi:\{0,1\}^V \to \{0,1\}$. 
Let $x_v$ ($v \in V)$ be a Boolean variable, and a {\em literal} means a Boolean variable $x_v$ or its complement $\overline{x}_v$. 
A disjunction of literals  $c = \bigvee_{v \in P(c)}x_v \vee
\bigvee_{i \in N(c)}\overline{x}_v$ is called a {\em clause} if  $P(c) \cap N(c)=\emptyset$, 
 and  a {\em conjunctive normal form} (CNF, in short) is 
a conjunction of clauses. 
A  CNF is called {\em Horn}, {\em definite Horn},  and {\em dual Horn}
if each clause has at most one positive literal,
 exactly one positive literal, and
at most one negative literal, respectively.

\section{Hardness of the posimodular function minimization}\label{hardness-sec}
\setcounter{equation}{0}
In this section, 
we analyze the number of oracle calls
necessary to solve the posimodular function minimization.

Let
$g :2^V \to \mathbb{R}_+$ be a function defined by
$g(X)=|X|$ if $X \neq \emptyset$, and $g(\emptyset)=0$.
Clearly, $g$ is posimodular, since $g$ is monotone, i.e.,  
$g(X)\geq g(Y)$ holds for all two subsets $X$ and $Y$ of $V$ with $X \supseteq Y$. 
For a positive integer $k$ with $k \leq n/2$, 
let $S$ be a subset of $V$ of size $|S| =2k$.
Define a function  $g_S :2^V \to \mathbb{R}_+$ by
\[
g_S(X)=
\left\{ \begin{array}{ll}
2k-|X| & \mbox{ if } X \subseteq S \mbox{ and } |X|\geq k+1, \\
g(X)&\mbox{ otherwise}. 
\end{array} \right.
\]
\noindent
We can see that $g_S$ is a posimodular function close to $g$. 

\begin{claim}
\label{claim--1}
 $g_S$ is posimodular.
\end{claim}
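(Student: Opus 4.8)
The plan is to control how much $g_S$ deviates from the plain cardinality function $g(X)=|X|$, which is itself posimodular. First I would record the \emph{exact} posimodularity slack of $g$: since $|X|+|Y|=|X\setminus Y|+|Y\setminus X|+2|X\cap Y|$, the defect $g(X)+g(Y)-g(X\setminus Y)-g(Y\setminus X)$ equals $2|X\cap Y|\ge 0$. Next I would write $g_S(X)=g(X)-\delta(X)$, where the penalty is $\delta(X)=2(|X|-k)$ when $X\subseteq S$ and $|X|\ge k+1$ (i.e.\ on the modified region, call it $\mathcal D$), and $\delta(X)=0$ otherwise. One checks that $\delta\ge 0$ and that the two branches of $g_S$ agree at $|X|=k$, so that on subsets of $S$ the function $g_S$ is the tent $\min(|X|,2k-|X|)$ peaking at $|X|=k$; this confirms $g_S$ is well defined and $\delta$ is the right description of the deviation.

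Substituting $g_S=g-\delta$ into the posimodular inequality and cancelling the $g$-terms via the slack computed above, the entire claim reduces to the single inequality
\[
2|X\cap Y|+\delta(X\setminus Y)+\delta(Y\setminus X)\ \ge\ \delta(X)+\delta(Y)
\]
for all $X,Y\subseteq V$. This inequality is symmetric in $X$ and $Y$, and its right-hand side vanishes unless at least one of $X,Y$ lies in $\mathcal D$, which suggests organizing the argument by how many of $X,Y$ belong to $\mathcal D$.

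I would then split into three cases. If neither $X$ nor $Y$ is in $\mathcal D$, the right-hand side is $0$ and the inequality is trivial. If both lie in $\mathcal D$, then $X,Y\subseteq S$ with $|S|=2k$ forces $|X\cup Y|\le 2k$, hence $|X\cap Y|\ge |X|+|Y|-2k$, and $2|X\cap Y|$ alone already dominates the right-hand side $2(|X|-k)+2(|Y|-k)$. The delicate case is when exactly one of them, say $X$, is in $\mathcal D$: here $X\setminus Y\subseteq X\subseteq S$ automatically, so I would further distinguish whether $X\setminus Y$ stays large enough to remain in $\mathcal D$. If it does, then $\delta(X\setminus Y)=2(|X|-|X\cap Y|-k)$, and combining with $2|X\cap Y|$ exactly recovers $2(|X|-k)$; if it does not, then $|X\setminus Y|\le k$ yields $|X\cap Y|\ge |X|-k$, so $2|X\cap Y|$ suffices by itself.

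The main obstacle is precisely this mixed case, where $X$ incurs a penalty but its difference $X\setminus Y$ may or may not. The resolution is the bookkeeping identity $|X\setminus Y|=|X|-|X\cap Y|$ together with the threshold $|X\setminus Y|\le k\Leftrightarrow X\setminus Y\notin\mathcal D$, which lets a missing $\delta(X\setminus Y)$ be compensated by extra $|X\cap Y|$. Once the reduction to the displayed inequality is in place, I expect everything else to be routine substitution.
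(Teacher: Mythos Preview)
Your proof is correct and follows essentially the same three-way case split as the paper (neither, one, or both of $X,Y$ in the modified region $\mathcal D$), with the same underlying inequalities in each case. Your repackaging via the defect $\delta=g-g_S$ and the reduced inequality $2|X\cap Y|+\delta(X\setminus Y)+\delta(Y\setminus X)\ge\delta(X)+\delta(Y)$ is a cleaner bookkeeping device, but the substance of the argument---in particular the use of $|X\cup Y|\le 2k$ in the ``both'' case and the threshold $|X\setminus Y|\le k$ in the mixed case---matches the paper's proof.
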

\begin{proof}
Note first  that $g_S(X)\leq |X|$ for all $X\subseteq V$, 
since $|X|-(2k-|X|)\geq 0$ if  $|X|\geq k+1$.
Let $X$ and $Y$ be two subsets of $V$ with $X \cap Y \not=\emptyset$. 
We separately consider the following two cases. 

If at least one of $X$ and $Y$ has the identical function values for $g_S$ and $g$, say $g_S(X)=g(X)$,  
then we have $g_S(X)-g_S(X\setminus Y)\geq |X \cap Y|$. 
If $g_S(Y)=g(Y)$ is also satisfied, then we obtain 
$g_S(Y)-g_S(Y \setminus X) \geq |X \cap Y|$, and hence the posimodular inequality (\ref{posi:eq}) holds for such $X$ and $Y$. 
On the other hand, if $g_S(Y)\not=g(Y)$, i.e., $Y \subseteq S$ and $|Y|\geq k+1$,  
then  we have $g_S(Y)-g_S(Y \setminus X) \geq -|X\cap Y|$, which again implies the posimodular inequality  (\ref{posi:eq}). 

If $g_S(X)\not=g(X)$ and $g_S(Y)\not=g(Y)$ are satisfied, then 
 we have   $g_S(X)=2k-|X|$ and $g_S(Y)=2k-|Y|$. Since  
$|X\setminus Y|, |Y\setminus X|\leq k$,  we also have 
$g_S(X\setminus Y)=|X\setminus Y|$
and
$g_S(Y\setminus X)=|Y\setminus X|$.
Hence, it holds that 
\begin{eqnarray*}
g_S(X)+g_S(Y)-(g_S(X\setminus Y)+g_S(Y \setminus X))=4k-2|X\cup Y| \,\geq 0,
\end{eqnarray*}
where the last inequality follows from $X\cup Y\subseteq S$ and $|S|= 2k$.
Therefore the posimodular inequality (\ref{posi:eq})  holds. 
\end{proof}

Let   ${\cal G}=\{g\} \cup \{g_S \mid S\subseteq V, |S|=2k\}$. 
We below show that exponential oracles is necessary to distinguish among posimodular functions in $\cal G$.

Let ${\cal S}=\{S \subseteq V \mid |S|=2k\}$ and  ${\cal T}=\{T \subseteq V \mid k+1 \leq |T| \leq 2k\}$. 
Consider  the following integer programming
problem:
\begin{equation}\label{cover:prob}
\begin{array}{lll}
 \mbox{minimize} & \sum_{T \in {\cal T}} z_T& \\[.08cm] 
\mbox{subject to}& \sum_{T \in {\cal T}: T\subseteq S} z_T \geq 1
&  \mbox{for each }S \in {\cal S},
 \\[.08cm]
&  z_T \in \{0,1\} & \mbox{for each }T \in {\cal T}. 
\end{array} 
\end{equation}

\noindent
Note that any posimodular function $f$ in ${\cal G}$ satisfies $f(X)=g(X)$ if  $|X|\leq k$ or $|X| \geq 2k+1$. 
Oracle calls for such sets $X$ do not help to distinguish  among posimodular functions in $\cal G$. 
Therefore, 
we can restrict our attention to subsets $T$ in ${\cal T}$ for oracle calls.

\begin{lemma}
\label{lemma--1}
Let $q_k$ denote the optimal value for $(\ref{cover:prob})$.
Then  at least $q_k$ oracle calls is necessary to distinguish among posimodular functions in $\cal G$.
\end{lemma}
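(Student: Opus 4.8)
The plan is to give an adversary argument that couples the indistinguishability of functions in ${\cal G}$ with the infeasibility of small solutions to the covering program (\ref{cover:prob}). The starting point is the reduction already prepared in the excerpt: every $f \in {\cal G}$ agrees with $g$ on all sets $X$ with $|X| \leq k$ or $|X| \geq 2k+1$, so oracle calls on such sets are useless for distinguishing, and we may assume all informative queries lie in ${\cal T}$. Moreover, for a fixed $S \in {\cal S}$, the functions $g$ and $g_S$ differ at a set $T \in {\cal T}$ if and only if $T \subseteq S$, since $g_S(T) = 2k - |T| \neq |T| = g(T)$ precisely when $T \subseteq S$ (recall $k+1 \leq |T| \leq 2k$ for $T \in {\cal T}$).

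First I would fix any algorithm $A$ purporting to distinguish the members of ${\cal G}$ and run it against the adversary that answers every query $X$ by the value $g(X) = |X|$. Because the responses are fixed in advance, the adaptively generated sequence of queries is determined, and I let $Q \subseteq {\cal T}$ be the collection of queried sets lying in ${\cal T}$. The crux is to read $Q$ as a candidate solution of (\ref{cover:prob}) via the assignment $z_T = 1$ exactly when $T \in Q$: the constraint for $S \in {\cal S}$ demands some queried $T \in Q$ with $T \subseteq S$, which is exactly the condition that $Q$ contains a set on which $g$ and $g_S$ disagree. Hence $Q$ is feasible for (\ref{cover:prob}) if and only if, for every $S \in {\cal S}$, the algorithm has queried a set witnessing $g \neq g_S$.

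Next, suppose for contradiction that $A$ makes fewer than $q_k$ queries, so $|Q| < q_k$. Since $q_k$ is the minimum objective value of (\ref{cover:prob}), the set $Q$ cannot be feasible, and therefore there is some $S^* \in {\cal S}$ with no $T \in Q$ satisfying $T \subseteq S^*$. By the structural observation, $g_{S^*}$ agrees with $g$ on every set in $Q$ and on every queried set outside ${\cal T}$; thus the adversary's answers are consistent with both $g$ and $g_{S^*}$. Since $g \neq g_{S^*}$ — they already differ on $S^*$, where $g(S^*) = 2k$ while $g_{S^*}(S^*) = 2k - |S^*| = 0$ — the algorithm cannot tell these two inputs apart, a contradiction. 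This forces at least $q_k$ oracle calls.

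The step I expect to be the most delicate is the handling of adaptivity: I must ensure that the query set $Q$ is genuinely well-defined even though queries are chosen adaptively, which is exactly what the ``always answer as $g$'' adversary guarantees, since under a fixed answer stream the algorithm's behaviour, and hence its entire query sequence, is deterministic. The remaining care is purely bookkeeping — confirming that queries in ${\cal T}$ are the only ones that matter and that the feasibility reformulation of $Q$ matches (\ref{cover:prob}) verbatim.
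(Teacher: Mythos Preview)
Your argument is correct and follows essentially the same approach as the paper: run the algorithm as if the input were $g$, collect the queries, and observe that if fewer than $q_k$ queries in ${\cal T}$ are made then by optimality of $q_k$ some $S^*\in{\cal S}$ is uncovered, whence $g$ and $g_{S^*}$ are indistinguishable. The paper's proof is terser and does not explicitly frame the query set as a candidate solution of (\ref{cover:prob}) or discuss adaptivity, but the underlying idea is identical.
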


\begin{proof}
Assume to the contrary that there exists an algorithm $A$ 
which distinguishes by  at most $q_k-1$ oracle calls among posimodular functions in $\cal G$.
Let $\cal X$ denote the family of subsets of $V$ which are called by $A$ if a posimodular function $g$ is an input of $A$. Since $|{\cal X}| \leq q_k-1$, we have a subset $S$ in $\cal S$ such that no $X \in {\cal X}$ satisfies 
$X \subseteq S$ and $|X| \geq k+1$. 
This means that $g_S(X)=g(X)$ for all $X \in {\cal X}$, which contradicts that $A$ distinguishes between $g$ and $g_S$. 
\end{proof}

It follows from Lemma \ref{lemma--1} that $q_k$ oracle calls are required for the posimodular function minimization. 
We now analyze the optimal value $q_k$ for $(\ref{cover:prob})$.

\begin{lemma}
\label{lemma--2}
Let $q_k$ denote the optimal value for $(\ref{cover:prob})$.
Then we have  $q_k \geq {n \choose k+1}/{2k \choose k+1}$. 
\end{lemma}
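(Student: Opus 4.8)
The plan is to read $(\ref{cover:prob})$ as a covering problem: a feasible solution is a subfamily $\mathcal{T}^*\subseteq\mathcal{T}$ such that every $S\in\mathcal{S}$ contains at least one member of $\mathcal{T}^*$, and $q_k$ is the minimum size of such a $\mathcal{T}^*$. The lower bound will then follow from a double-counting argument that bounds how many sets of $\mathcal{S}$ a single $T\in\mathcal{T}$ can cover.

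First I would fix $T\in\mathcal{T}$ with $|T|=t$ (so $k+1\le t\le 2k$) and count the sets $S\in\mathcal{S}$ with $T\subseteq S$: such an $S$ is obtained by adjoining $2k-t$ of the remaining $n-t$ elements, so there are exactly $\binom{n-t}{2k-t}$ of them. Next I would show this quantity is nonincreasing in $t$ on $[k+1,2k]$: the ratio of consecutive values is $\binom{n-t-1}{2k-t-1}/\binom{n-t}{2k-t}=(2k-t)/(n-t)\le 1$, where the inequality holds because $2k\le n$ (recall $k\le n/2$). Hence each $T\in\mathcal{T}$ covers at most $\binom{n-k-1}{k-1}$ members of $\mathcal{S}$, the maximum being attained at $|T|=k+1$.

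Since any feasible solution must cover all $|\mathcal{S}|=\binom{n}{2k}$ sets, and each chosen $T$ covers at most $\binom{n-k-1}{k-1}$ of them, I obtain $q_k\ge \binom{n}{2k}/\binom{n-k-1}{k-1}$. Finally I would convert the right-hand side into the claimed form via the subset-of-a-subset identity $\binom{n}{2k}\binom{2k}{k+1}=\binom{n}{k+1}\binom{n-k-1}{k-1}$, obtained by counting in two ways the pairs $(R,S)$ with $R\subseteq S$, $|R|=k+1$, $|S|=2k$; dividing gives $\binom{n}{2k}/\binom{n-k-1}{k-1}=\binom{n}{k+1}/\binom{2k}{k+1}$, and hence $q_k\ge \binom{n}{k+1}/\binom{2k}{k+1}$.

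The individual steps are routine, and the only point that really needs care is the monotonicity claim that $\binom{n-t}{2k-t}$ peaks at $t=k+1$, since it pins down the uniform bound $\binom{n-k-1}{k-1}$. I would also flag a tempting but faulty alternative: arguing that $\mathcal{T}^*$ must contain every $(k+1)$-subset and counting those instead. This fails, because a minimal cover need not contain all $(k+1)$-subsets (small examples already show this), so the counting must be performed on the side of the sets $S$ to be covered rather than on the side of their small subsets.
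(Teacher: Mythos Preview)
Your argument is correct. It is, however, a genuinely different route from the paper's. The paper proves the bound via linear programming duality: it passes to the LP relaxation of $(\ref{cover:prob})$, exhibits a primal feasible vector $z^*$ supported on the $(k{+}1)$-sets with all coordinates equal to $1/\binom{2k}{k+1}$, and a dual feasible vector $y$ with $y_S=1/\binom{n-k-1}{k-1}$ for all $S\in\mathcal{S}$; since both have objective value $\binom{n}{k+1}/\binom{2k}{k+1}$, this is the LP optimum, and hence a lower bound on $q_k$. Your double-counting argument is the combinatorial shadow of the dual half of this: the monotonicity step you isolate, showing that each $T\in\mathcal{T}$ is contained in at most $\binom{n-k-1}{k-1}$ members of $\mathcal{S}$, is precisely the verification that the paper's $y$ is dual feasible, and your final inequality $q_k\ge |\mathcal{S}|/\binom{n-k-1}{k-1}$ is weak duality spelled out by hand. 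What you gain is an elementary proof that avoids LP machinery entirely; what the paper gains is the extra information that the LP relaxation attains the bound exactly (since it also checks the primal side), which your argument does not supply but which the lemma does not require.
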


\begin{proof}
Consider the linear programming relaxation for Problem (\ref{cover:prob}) which is obtained by replacing 
 each binary constraint $z_T \in \{0,1\}$ by $z_T \geq 0$: 

\begin{equation}\label{cover:probLP}
\begin{array}{lll}
 \mbox{minimize} & \sum_{T \in {\cal T}} z_T& \\[.08cm] 
\mbox{subject to}& \sum_{T \in {\cal T}: T\subseteq S} z_T \geq 1
&  \mbox{for each }S \in {\cal S},
 \\[.08cm]
&  z_T \geq 0 & \mbox{for each }T \in {\cal T}. 
\end{array} 
\end{equation}
Define a vector $z^* \in \mathbb{R}^{\cal T}$ by 
$z^*_T=1/{2k \choose k+1}$ if $|T|=k+1$, and $0$ otherwise. 
Note that $z^*$ is feasible to (\ref{cover:probLP}), and the objective value is 
\begin{eqnarray}
\label{eq--3}
\sum_{T \in {\cal T}} z^*_T&=&\frac{{n \choose k+1}}{{2k \choose k+1}}. 
\end{eqnarray}
Moreover, we show that it is optimal to (\ref{cover:probLP}). 

Define $y \in \mathbb{R}^{\cal S}$ by $y_S=1/{n-(k+1) \choose k-1}$ for all $S \in {\cal S}$. 
Then this $y$ is feasible to the dual problem  of  (\ref{cover:probLP}), and the objective value is 
\begin{eqnarray}
\label{eq--4}
\sum_{S \in {\cal S}} y_S&=&\frac{{n \choose 2k}}{{n-(k+1) \choose k-1}} \,=\,\frac{{n \choose k+1}}{{2k \choose k+1}}. 
\end{eqnarray}
By (\ref{eq--3}) and (\ref{eq--4}), $z^*$ is an optimal solution of (\ref{cover:probLP}). 
Since it is a relaxation of the minimization problem, we have $q_k \geq {n \choose k+1}/{2k \choose k+1}$. 
\end{proof}

For $k \geq 2$, we note that 
\begin{eqnarray}
  \frac{{n \choose k+1}}{{2k \choose k+1} } & = &  \frac{n! (k-1)!}{(2k)!(n-k-1)!} \nonumber \\[.2cm]
& \geq &
\frac{2\pi}{e^2}\cdot\frac{n^{n+1/2}(k-1)^{k-1/2}}{(2k)^{2k+1/2}(n-k-1)^{n-k-1/2}}\nonumber \\[.2cm]
& \geq & \frac{2\pi}{e^2}\cdot  \Bigl(\frac{n}{2k}\Bigr)^{k+1}\!\!\cdot
\Bigl(\frac{k-1}{2k}\Bigr)^{k-1/2} \nonumber \\[.2cm]
& \geq & \frac{2\pi}{e^2}\cdot  \Bigl(\frac{n}{4k} \Bigr)^{k+1}. \label{eq--5} 
\end{eqnarray} 
Here the second, third, and fourth inequalities respectively follow from
Stirling's inequalities
$\sqrt{2\pi}n^{n+1/2}e^{-n} \leq n! \leq e n^{n+1/2}e^{-n}$,
$n \geq n-k-1$, and 
$(1-\frac{1}{k})^{k-1/2}\geq \frac{1}{2\sqrt{2}}$ for $k\geq 2$. 
By setting   $n=\lceil 4ek \rceil$, 
we obtain that (\ref{eq--5}) is 
 $\Omega(e^{\frac{n}{4e}})=\Omega(2^{\frac{n}{7.54}})$.

Thus, we have the following theorem.
\begin{theorem}\label{hardness1:th}
Any algorithm for the posimodular function minimization  requires 
 $\Omega(2^{\frac{n}{7.54}})$ oracle calls.
 \end{theorem}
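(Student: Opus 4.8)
The plan is to recognize that, by the point this statement is reached, the substantive estimates are already in place: Lemma~\ref{lemma--1} bounds from below by $q_k$ the number of oracle calls needed to distinguish the members of ${\cal G}$, Lemma~\ref{lemma--2} bounds $q_k$ from below by ${n \choose k+1}/{2k \choose k+1}$, and inequality~(\ref{eq--5}) converts this into the clean form $\frac{2\pi}{e^2}(\frac{n}{4k})^{k+1}$ for $k\ge 2$. What remains is therefore twofold: first to reduce the minimization problem to the distinguishing problem, so that the $q_k$ lower bound applies to minimization, and then to choose the free parameter $k$ as a function of $n$ so as to make the resulting bound as large as possible.

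For the reduction, I would compute the minimizers of the functions in ${\cal G}$ and observe that they differ. Over nonempty sets, $g$ attains its minimum value $1$ at every singleton, whereas each $g_S$ satisfies $g_S(S)=2k-|S|=0$ (using $|S|=2k\ge k+1$), so the minimum of $g_S$ equals $0$ and is attained at $S$. Since a correct algorithm for the problem outputs the optimal value, it must return different answers on input $g$ and on each input $g_S$; in particular it distinguishes $g$ from every $g_S$, and by Lemma~\ref{lemma--1} it therefore makes at least $q_k$ oracle calls. Chaining Lemma~\ref{lemma--2} with~(\ref{eq--5}) then yields a lower bound of $\frac{2\pi}{e^2}(\frac{n}{4k})^{k+1}$, and I would finish by taking $n=\lceil 4ek\rceil$, i.e.\ $k\approx n/(4e)$, which makes the base $n/(4k)$ essentially equal to $e$ and gives $\Omega(e^{n/(4e)})=\Omega(2^{n/7.54})$. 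I would also check that this choice satisfies the side conditions $k\ge 2$ and $|S|=2k\le n$ required for ${\cal G}$ to be well-defined once $n$ is large.

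The only genuinely new content at the level of the theorem is the reduction in the second paragraph: one must verify that solving the minimization problem is \emph{at least as hard} as distinguishing the family ${\cal G}$, rather than merely comparable to it. This is exactly where the differing optimal values ($0$ versus $1$) are used, and once they are pinned down the implication is immediate via the adversary argument already encapsulated in Lemma~\ref{lemma--1}. The remaining combinatorial difficulty — the linear-programming duality certificate for $z^*$ in Lemma~\ref{lemma--2} and the Stirling estimates behind~(\ref{eq--5}) — lives in the preceding lemmas, so assembling the theorem from them is then routine.
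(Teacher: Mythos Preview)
Your proposal is correct and follows essentially the same route as the paper: it invokes Lemmas~\ref{lemma--1} and~\ref{lemma--2} together with the Stirling estimate~(\ref{eq--5}) and then sets $n=\lceil 4ek\rceil$ to obtain $\Omega(e^{n/(4e)})=\Omega(2^{n/7.54})$. The one place where you are more explicit than the paper is in justifying why distinguishing within ${\cal G}$ is necessary for minimization---you compute that the optimal values are $1$ for $g$ and $0$ for $g_S$---whereas the paper simply asserts this implication in the line preceding Lemma~\ref{lemma--2}.
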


\nop{
Also, we can derive the following inapproximability result as a corollary of Theorem \ref{hardness1:th}. 
\begin{corollary}
Any $2$-approximation algorithm for the posimodular function minimization  requires 
 $\Omega(2^{\frac{n}{7.54}})$ oracle calls.
 \end{corollary}
}

Let us next consider the case in which the range of $f$ is bounded by $D=\{0,1,\ldots,d\}$ for some nonnegative integer $d$. 
We show the exponential lower bound  in a similar way to the proof of 
Theorem~\ref{hardness1:th}.

Let $T$ be a subset of $V$ with
 $|T|=\lfloor d/ 2 \rfloor$. 
Define 
$g :2^V \to D$  by
\[
g(X)=
\left\{ \begin{array}{ll}
0 &  \mbox{ if }X=\emptyset, \\
|X| & \mbox{ if }\emptyset \neq X \subseteq T,\\ 
|T|+|T \cap X| & \mbox{ otherwise}.
\end{array} \right.
\]
\noindent
For a positive integer $k$ with $2k \leq |T|$, 
let $S$ be a subset of $T$ with $|S| =2k$. 
Define a function  $g_S :2^V \to D$ by
\[
g_S(X)=
\left\{ \begin{array}{ll}
2k-|X| & \mbox{if } X \subseteq S \mbox{ and }|X|\geq k+1, \\
g(X) & \mbox{otherwise}. 
\end{array} \right.
\]

\noindent
The monotonicity of $g$ implies that $g$ is posimodular.
The posimodularity of $g_S$ can be shown as follows.

Let $X$ and $Y$ be two subsets of $V$.
If both $X$ and $Y$ are subsets of $T$, then 
the posimodular inequality (\ref{posi:eq}) follows from Claim \ref{claim--1}.
We therefore assume that  $X \setminus T \neq \emptyset$. 
Then $g_S(X)=|T|+|T \cap X|$ holds.
Note that $|T|+|T \cap Z|\geq g_S(Z)$ holds for all $Z \subseteq T$.
Thus we have 
\begin{eqnarray}
g_S(X)- g_S(X\setminus Y)&\geq &  |T \cap X|-|T \cap (X
\setminus Y)|\nonumber\\
&=&|T \cap X \cap Y| \,\,\,(\geq 0). \label{eq--7}
\end{eqnarray}
If $Y \setminus T \neq \emptyset$ is also satisfied, then
we have 
 $g_S(Y) \geq g_S(Y\setminus X)$,
from which the posimodular inequality (\ref{posi:eq}) holds. 
On the other hand, if $Y \subseteq  T$, 
then we have $g_S(Y)-g_S(Y\setminus X) \geq -|X \cap Y|$ by $Y \setminus X \subseteq T$. 
Moreover, (\ref{eq--7}) implies $g_S(X)-g_S(X\setminus Y) \geq |X\cap Y|$ by $X\cap Y \subseteq T$, 
and hence we obtain (\ref{posi:eq}). 

If $k \geq 2$ and $|T|\approx  4ek$ (and hence $d\approx 8ek$), then 
by applying an argument similar to Lemmas \ref{lemma--1} and \ref{lemma--2}, we have the following result.

\begin{theorem}
Assume that a given  posimodular function has range $D$. 
Then the posimodular function minimization  requires 
 $\Omega(2^{\frac{d}{15.08}})$ oracle calls.
 \end{theorem}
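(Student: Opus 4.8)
The plan is to reuse the entire machinery developed for Theorem~\ref{hardness1:th}, but with the ground set $V$ replaced by the base set $T$ of size $|T|=\lfloor d/2\rfloor$. Concretely, I would work with the family $\mathcal{G}'=\{g\}\cup\{g_S\mid S\subseteq T,\ |S|=2k\}$ built from the functions just defined. Each member of $\mathcal{G}'$ is posimodular (as verified above) and has range $D$, so any lower bound on the number of oracle calls needed to distinguish the functions in $\mathcal{G}'$ transfers directly to the posimodular function minimization with range $D$.

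The first step is an adversary argument identical in spirit to Lemma~\ref{lemma--1}. The crucial observation is that $g_S(X)\neq g(X)$ can occur only when $X\subseteq S$ and $|X|\geq k+1$; since $S\subseteq T$ and $|S|=2k$, every such distinguishing set satisfies $X\subseteq T$ and $k+1\leq|X|\leq 2k$. Hence an oracle call on any set not of this form reveals no information separating $g$ from any $g_S$, and I may restrict attention to query sets in $\mathcal{T}'=\{X\subseteq T\mid k+1\leq|X|\leq 2k\}$ that must cover the family $\mathcal{S}'=\{S\subseteq T\mid |S|=2k\}$. This is exactly the covering structure of the integer program $(\ref{cover:prob})$ with $V$ replaced by $T$. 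Thus, letting $q_k'$ denote the optimum of that program over the ground set $T$, at least $q_k'$ oracle calls are needed to distinguish the functions in $\mathcal{G}'$, by the same contradiction as in Lemma~\ref{lemma--1}.

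The second step is the lower bound on $q_k'$, obtained by the LP-relaxation-and-duality computation of Lemma~\ref{lemma--2}: the fractional primal solution $z^*_X=1/\binom{2k}{k+1}$ for $|X|=k+1$ and the dual solution $y_S=1/\binom{|T|-(k+1)}{k-1}$ are again mutually optimal (the underlying binomial identity holds for any ground-set size), yielding $q_k'\geq\binom{|T|}{k+1}/\binom{2k}{k+1}$. Feeding $|T|$ in place of $n$ into the Stirling estimate $(\ref{eq--5})$ gives $q_k'\geq\frac{2\pi}{e^2}\bigl(\frac{|T|}{4k}\bigr)^{k+1}$, and choosing $|T|=\lceil 4ek\rceil$ (so that $d\approx 2|T|\approx 8ek$) makes this $\Omega(e^{|T|/(4e)})=\Omega(2^{|T|/7.54})$. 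Since $|T|=\lfloor d/2\rfloor$, this is $\Omega(2^{d/15.08})$, as claimed; the doubling of the denominator in the exponent is precisely the price of hiding the hard instance inside a set of size $d/2$ rather than all of $V$.

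I do not expect any single step to be a serious obstacle, as the structure mirrors the unbounded-range case. The one point requiring genuine care is confirming that the reduction to the covering program is clean: for the two-tiered definition of $g$ (which equals $|X|$ on subsets of $T$ but $|T|+|T\cap X|$ otherwise) one must check that $g_S(X)=g(X)$ really holds for every $X$ outside the distinguishing family, so that the adversary can keep the answers to all other queries fixed while switching between $g$ and an uncovered $g_S$. This is immediate from the definition of $g_S$, but it is the hinge on which the whole argument turns.
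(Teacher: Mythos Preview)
Your proposal is correct and matches the paper's own argument essentially line for line: the paper likewise restricts the hard instance to the set $T$ with $|T|=\lfloor d/2\rfloor$, invokes the covering adversary of Lemma~\ref{lemma--1} and the LP bound of Lemma~\ref{lemma--2} with $T$ in place of $V$, and then sets $|T|\approx 4ek$ (so $d\approx 8ek$) to obtain $\Omega(2^{d/15.08})$. The only point the paper states explicitly that you leave implicit is why the adversary argument yields a lower bound for \emph{minimization} (namely, $g$ has nonempty minimum value $1$ while each $g_S$ has minimum value $0$), but this is straightforward.
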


\section{Polynomial time algorithm for posimodular function minimization when $d$ is  a constant}\label{dconst-sec}
\setcounter{equation}{0}
In this section, we show that the posimodular function minimization can be solved in polynomial time if 
an input posimodular function is restricted to be $f: 2^V \to \{0,1,\ldots ,d \}$ for some constant $d$. 
We first show that for $d \leq 3$,  the posimodular function minimization can be solved efficiently by  repeatedly contracting semi-extreme sets, and then provides an $O(n^dT_f+n^{2d+1})$-time algorithm for general $d$. 

In this section, 
an optimal solution to the posimodular function minimization (\ref{posi:prob})
 is referred to as {\em a minimizer of $f$} (among nonempty subsets). 


\subsection{Case in which $d \in \{0,1,2,3\}$}

 
Let $f:2^V\to \{0,1,\dots , d\}$ be a function, and let $s$ be an element with $s \not\in V$. 
For a subset $S \subseteq V$, 
let $f':2^{ (V \setminus S)\cup \{s\}} \to \{0,1,\dots , d\}$ be a function defined by 
\[
f'(X)=
\left\{ \begin{array}{ll}
f(X) & \mbox{if } s \not\in X, \\
f((X\setminus \{s\}) \cup S) & \mbox{otherwise}. 
\end{array} \right.
\]
We say that  {\em the function $f'$ 
is obtained from $f$  by  contracting a subset $S$ of $V$
into an element $s$}.  
Notice that $f'$ is posimodular if it is obtained form 
 a posimodular function  by contraction. 
A nonempty subset $X$ of $V$ is called {\em semi-extreme} (w.r.t $f$) if
all nonempty subsets $Y$ of $X$ satisfy $f(Y)\geq f(X)$.
By the following lemma, we can contract any semi-extreme set
while keeping at least one minimizer of $f$.  

\begin{lemma}\label{extreme:lem}
Let $f$ be a posimodular function.  For any   semi-extreme set $X$,
there exists a minimizer  $Y$ of $f$ such that $Y \supseteq X$ or $X\cap Y =\emptyset$. 
\end{lemma}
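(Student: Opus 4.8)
The plan is to use the posimodularity of $f$ together with the semi-extreme property of $X$ to perform an exchange argument. Let $Y$ be a minimizer of $f$ that, among all minimizers, maximizes $|X \cap Y|$ (ties broken arbitrarily). I claim this $Y$ satisfies $Y \supseteq X$ or $X \cap Y = \emptyset$. Suppose for contradiction that neither holds, so $X$ and $Y$ intersect in the sense of the preliminaries: $X \cap Y$, $X \setminus Y$, and $Y \setminus X$ are all nonempty.

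First I would apply the posimodular inequality (\ref{posi:eq}) to the pair $X$ and $Y$, which gives
\[
f(X) + f(Y) \geq f(X \setminus Y) + f(Y \setminus X).
\]
Since $X \setminus Y$ is a nonempty proper subset of the semi-extreme set $X$, the semi-extreme property yields $f(X \setminus Y) \geq f(X)$. Substituting this into the inequality above and cancelling $f(X)$, I obtain $f(Y) \geq f(Y \setminus X)$. Because $Y$ is a minimizer of $f$ among nonempty subsets and $Y \setminus X$ is nonempty (as $Y \not\subseteq X$), we conclude $f(Y \setminus X) = f(Y)$, so $Y \setminus X$ is also a minimizer of $f$.

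The key point is now that $Y \setminus X$ is a minimizer with $(Y \setminus X) \cap X = \emptyset$, so in fact $Y \setminus X$ is a minimizer that is disjoint from $X$. This already exhibits a minimizer of the desired form, since the disjoint alternative $X \cap Y' = \emptyset$ is satisfied by $Y' = Y \setminus X$. I expect the main subtlety to lie in organizing the argument cleanly: rather than the extremal choice of $Y$, it may be simpler to argue directly that \emph{if} some minimizer $Y$ intersects $X$ without containing it, then $Y \setminus X$ is a strictly better-behaved minimizer (disjoint from $X$), and hence a minimizer satisfying the conclusion always exists. The one place to be careful is verifying that $Y \setminus X$ is genuinely nonempty before invoking minimality, which follows precisely because we assumed $Y \not\supseteq X$ fails to force disjointness, i.e., $Y \setminus X \neq \emptyset$; and the semi-extreme property must be applied to $X \setminus Y$, which requires $X \setminus Y \neq \emptyset$, guaranteed by $X \not\subseteq Y$ in the intersecting case. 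Assembling these observations completes the proof.
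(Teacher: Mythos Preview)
Your approach is essentially the same as the paper's: apply posimodularity to $X$ and $Y$, use semi-extremeness on $X\setminus Y$, and conclude that $Y\setminus X$ is also a minimizer. However, there is a genuine case missing.

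The negation of ``$Y\supseteq X$ or $X\cap Y=\emptyset$'' does \emph{not} force $X$ and $Y$ to intersect in the sense of the preliminaries. It only gives $X\cap Y\neq\emptyset$ and $X\setminus Y\neq\emptyset$; it says nothing about $Y\setminus X$. Your sentence ``which follows precisely because we assumed $Y\not\supseteq X$ fails to force disjointness, i.e., $Y\setminus X\neq\emptyset$'' confuses the two set differences: $Y\not\supseteq X$ is exactly $X\setminus Y\neq\emptyset$, not $Y\setminus X\neq\emptyset$. So the case $\emptyset\neq Y\subsetneq X$ is unaccounted for, and in that case $Y\setminus X=\emptyset$ is not an admissible minimizer.

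The fix is immediate and is what the paper does: if $Y\subsetneq X$, then by semi-extremeness $f(Y)\geq f(X)$, so $X$ itself is a minimizer and trivially satisfies $X\supseteq X$. Once you add this one line, your argument is complete and matches the paper's. The extremal choice of $Y$ maximizing $|X\cap Y|$ is unnecessary (and the ``contradiction'' framing is never actually cashed out, since you end by exhibiting a good minimizer rather than contradicting the extremality); you can drop it and argue directly, as the paper does.
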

\begin{proof}
Assume  that a minimizer  $Y$ of $f$ satisfies  $Y \not\supseteq X$ and $X\cap Y \not=\emptyset$. 
If $Y$ is a subset of $X$, then $X$ is also a minimizer of $f$ by  the semi-extremeness of $X$. 
On the other hand, if $Y$ intersects $X$, then it follows from  (\ref{posi:eq}) that 
$f(X)+f(Y)\geq f(X \setminus Y)+f(Y \setminus X)$.
Since $X$ is semi-extreme, we have $f(X)\leq f(X \setminus Y)$.
It follows that $f(Y)\geq f(Y \setminus X)$, which implies that 
$Y \setminus X$ is also a minimizer of $f$. 
\end{proof}

The following lemma indicates that we can obtain
a minimizer of $f$ after contracting a subset $X$ of $V$
with $|X|= 2$ at most $n$ times.

\begin{lemma}\label{d<=3:cl}
If   $d \leq 3$, then 
there exists a semi-extreme set $X$
 with $|X|= 2$,
or a minimizer  $Y$ of $f$ with $|Y| = 1$ or
$|Y| \geq n-1$.
\end{lemma}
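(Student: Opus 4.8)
The plan is to argue by contradiction, assuming that none of the three alternatives holds: there is no semi-extreme set of size $2$, no minimizer of size $1$, and no minimizer of size $\ge n-1$. The starting point is the observation that \emph{every} minimizer of $f$ is semi-extreme: if $f(Y)=m:=\min_{\emptyset\neq X\subseteq V}f(X)$, then every nonempty $Z\subseteq Y$ satisfies $f(Z)\ge f(Y)$, so $Y$ is semi-extreme by definition. Consequently a minimizer of size $2$ would already give the first alternative and a minimizer of size $1$ the second. Thus under our assumption every minimizer has size in $\{3,\dots,n-2\}$ (so $n\ge 5$); in particular no singleton is a minimizer, whence $m<\alpha$, where $\alpha:=\min_{v\in V}f(\{v\})$, and since $f\ge f(\emptyset)=0$ by (\ref{eq--1}) we have $1\le\alpha\le d\le 3$ and $0\le m\le\alpha-1\le 2$.

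Next I would fix an inclusion-minimal minimizer $Y$ and extract value constraints from posimodularity. The crucial inequality comes from applying (\ref{posi:eq}) to $X=Y$ and $Z=\{y,w\}$ with $y\in Y$ and $w\notin Y$ (such $w$ exists since $|Y|\le n-2$): since $X\setminus Z=Y\setminus\{y\}$ and $Z\setminus X=\{w\}$, this yields $f(Y)+f(\{y,w\})\ge f(Y\setminus\{y\})+f(\{w\})$. Minimality of $Y$ gives $f(Y\setminus\{y\})\ge m+1$ and by definition $f(\{w\})\ge\alpha$, so every ``crossing'' pair satisfies $f(\{y,w\})\ge\alpha+1$. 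Because the range is bounded by $d\le 3$, this already forces $\alpha\le 2$, i.e. $\alpha\in\{1,2\}$, and pins the value of crossing pairs near the top of the range. Reinterpreting ``no semi-extreme pair'' as the numerical condition $f(\{a,b\})>\min(f(\{a\}),f(\{b\}))$ for all $a\neq b$, one sees that a crossing pair $\{y,w\}$ with $f(\{y,w\})=\alpha+1$ fails to be semi-extreme only if $\min(f(\{y\}),f(\{w\}))\le\alpha$, which constrains how many vertices may carry singleton value above $\alpha$.

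From here I would run a short case analysis on the few remaining values of $m$ and $\alpha$ (only a handful, since $0\le m\le\alpha-1$ and $\alpha\in\{1,2\}$), in each case combining the crossing-pair bound with further instances of (\ref{posi:eq}) among $Y$, its co-elements, and the sets $Y\setminus\{y\}$ to either manufacture a pair that is forced to be semi-extreme or to push some function value strictly above $d$, contradicting the range $\{0,\dots,d\}$ with $d\le 3$. I expect this last step to be the main obstacle: posimodularity only bounds the ``small'' difference sets from above by the larger sets, so lower-bounding function values so as to exceed the range requires chaining several inequalities carefully and tracking the singleton values on and off $Y$ simultaneously. Lemma~\ref{max:lem}, applied to a maximizer $T$ of $f$, provides a useful auxiliary localization -- any set whose value is below $\alpha$ (such as $Y$, provided $Y\neq V$) can neither be disjoint from $T$ nor contain $T$ -- which I would use to restrict the relative position of $Y$ and $T$ and thereby cut down the number of subcases.
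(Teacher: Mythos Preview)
Your setup matches the paper's proof: argue by contradiction, derive lower bounds on singletons and pairs, then case-split. Two comments.

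First, the posimodular derivation of the crossing-pair bound $f(\{y,w\})\ge\alpha+1$ is a detour. The hypothesis ``no semi-extreme pair'' already gives $f(\{a,b\})>\min(f(\{a\}),f(\{b\}))\ge\alpha$ for \emph{every} pair, hence $f(\{a,b\})\ge\alpha+1$ uniformly; this is precisely the paper's inequality~(\ref{eq---1}) (with $\alpha\ge m+1$). Once every pair satisfies $f\ge\alpha+1$, the cases with $\alpha+1=d$ close via Lemma~\ref{max:lem} exactly as you anticipate: each pair is then a maximizer, so for any proper $Z$ with $|Z|\ge2$ take a $2$-subset $T\subseteq Z$ and apply Lemma~\ref{max:lem}(ii) to get $f(Z)\ge\alpha>m$, contradicting the existence of a minimizer of size in $\{3,\dots,n-2\}$.

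Second---and this is the genuine gap---you have not identified the idea that handles the only case not covered above, namely $d=3$, $m=0$, $\alpha=1$, where pairs are only known to satisfy $f\ge2$ and need not be maximizers. ``Chaining inequalities among $Y$, its co-elements, and the sets $Y\setminus\{y\}$'' is too vague here: the sets $Y\setminus\{y\}$ are only known to have value $\ge1$, and the localization of $Y$ relative to a maximizer $T$ via Lemma~\ref{max:lem} does not by itself produce a contradiction. The paper's move is a single, sharper application of posimodularity: since $3\le|Y|\le n-2$, one can choose $Z$ with $|Y\setminus Z|=|Z\setminus Y|=2$, and then
\[
3 \ \ge\ f(Y)+f(Z)\ \ge\ f(Y\setminus Z)+f(Z\setminus Y)\ \ge\ 2+2\ =\ 4,
\]
which is absurd. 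The point is to arrange \emph{both} differences to have size~$2$, so that the uniform pair bound $f(\{a,b\})\ge 2$ is invoked twice against the range bound $f\le 3$; this is the missing ingredient in your sketch.
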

\begin{proof}
Consider the case in which $n\geq 4$, since the lemma clearly holds for $n \leq 3$.
Assume to the contrary that no subset $X$ with $|X|= 2$ is semi-extreme and no subset $Y$ with $|Y|= 1$,  $n-1$ or $n$ is a minimizer of $f$. 
Let $X^*$ be a minimizer of $f$. 
Then by the assumption,  we have
\begin{equation}
\label{eq---1}
\left. \begin{array}{llll}
 f(Y) &\geq& f(X^*)+1  \,(\geq 1)  &\mbox{for all subsets } Y \mbox{ with } |Y|=1, n-1 \mbox{ or }n,  \\
 f(X) &\geq& f(X^*)+2  \,(\geq 2)  &\mbox{for all subsets } X \mbox{ with } |X|=2.  
\end{array} \right.
\end{equation}
This already proves   this lemma  for  $d=1$.

If $d=2$, then it follows from (\ref{eq---1})  that all subsets $X$ with $|X|=2$ satisfy $f(X)=2$.
Hence,  by Lemma~\ref{max:lem}, any
nonempty proper subset $Z$ of $V$ satisfies $f(Z)\geq \min\{f(v)\mid v \in V\}$.
This implies that some element of $V$ or  $V$ is a minimizer of $f$, which contradicts the assumption. 

For $d=3$,   we separately consider the cases in which 
the optimal value $f(X^*)$   is $0$, $1$, and  at least $2$.  

{\bf Case $f(X^*) \geq 2$}.   By  (\ref{eq---1}) we have $f(X)\geq 4$ for all subsets $X$ of $V$ with $|X|=2$,  
which contradicts the fact that $d=3$.

{\bf Case $f(X^*) =0$}. 
By the assumption, we have $|X^*|\geq 3$. Moreover,  
if $|X^*|\leq n-2$, then
there exists a subset $Z$ of $V$ such that  $|X^* \setminus Z|=|Z \setminus X^*|=2$.
By applying (\ref{posi:eq}) to $X^*$ and $Z$,
we have $3\geq f(X^*)+f(Z)\geq f(X^* \setminus Z)+f(Z \setminus X^*)$,
from which
$f(X^* \setminus Z)\leq 1$ or $f(Z \setminus X^*)\leq 1$.
Since this contradicts (\ref{eq---1}), we have $|X^*|\geq n-1$, which  again contradicts  (\ref{eq---1}).

{\bf Case $f(X^*)=1$}.
By  (\ref{eq---1}), all subsets $X$ of $V$ with $|X|=2$ satisfy $f(X)=3$. 
Similarly to the case of $d=2$, Lemma~\ref{max:lem} implies that any
nonempty proper subset $Z$ of $V$ satisfies $f(Z)\geq \min\{f(v)\mid v \in V\}$.
Hence some element of $V$ or  $V$ is a minimizer of $f$, which contradicts the assumption. 
\end{proof}

By the lemma, for $d\leq 3$, we first check function values $f(X)$ for all subsets $X$ with $|X|=1,2,n-1$, and $n$. If no subset $X$ with $|X|=2$ is semi-extreme, then we output a subset $X^*$ which satisfies $f(X^*)=\min_{X:|X|=1, n-1, {\rm or}\,\,n} f(X)$. 
Otherwise (i.e., if some $X$ with $|X|=2$ is semi-extreme),
 we consider the function $f'$ obtained from $f$ by contracting $X$  into a new element $x$, and check   $f'(X')$ for all subsets $X'$ with $|X'|=1,2,n-2$, and $n-1$.
Note that it is enough to check $f(X')$ for subsets $X'$ with $X' \ni x$ and $|X'|=2$, 
since the other $X'$ have been already checked during the first iteration.  
By repeating this procedure, we obtain a minimizer of $f$. 
Since the first iteration requires $O(n^2+n^2T_f)=O(n^2T_f)$ time and all the other iterations require 
 $O(n+nT_f)=O(nT_f)$, we have the following result.

\begin{theorem}\label{theremd<=3}
For  $d \leq 3$,  the posimodular function minimization can be solved in $O(n^2T_f)$ time. 
\end{theorem}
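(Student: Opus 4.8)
The plan is to turn Lemma~\ref{d<=3:cl} into an algorithm by repeatedly contracting semi-extreme pairs, justifying correctness via Lemma~\ref{extreme:lem} and bounding the total running time. First I would establish the algorithm's structure: check $f(X)$ for all $X$ with $|X| \in \{1,2,n-1,n\}$. By Lemma~\ref{d<=3:cl}, either some pair $X$ with $|X|=2$ is semi-extreme, or one of the examined singletons, $(n-1)$-sets, or $V$ itself is a minimizer. In the latter case we are done and simply output the best such set. In the former case, we contract a semi-extreme pair into a new element and recurse on the resulting posimodular function $f'$ on a ground set of size $n-1$.

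The correctness argument rests on two points. First, by the remark following the definition of contraction, $f'$ obtained from a posimodular $f$ is again posimodular, and its range stays within $D$, so Lemma~\ref{d<=3:cl} applies at every iteration. Second, by Lemma~\ref{extreme:lem}, contracting a semi-extreme set $X$ preserves at least one minimizer: there is a minimizer $Y$ with $Y \supseteq X$ or $X \cap Y = \emptyset$, and such a $Y$ corresponds exactly to a nonempty subset in the contracted instance achieving the same function value. Hence the minimum value is preserved, and unfolding the contractions reconstructs an actual minimizer of the original $f$. Since each contraction strictly decreases the ground-set size, the process terminates after at most $n$ iterations.

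For the running time I would argue as the text sketches. The first iteration inspects all singletons (there are $n$) and all pairs (there are $\binom{n}{2}=O(n^2)$), together with the $(n-1)$-sets and $V$, costing $O(n^2 T_f)$ oracle calls plus $O(n^2)$ bookkeeping, i.e. $O(n^2 T_f)$ overall. In each subsequent iteration, after contraction into a new element $x$, only the newly created pairs $X'$ with $x \in X'$ (there are $O(n)$ of them) and the relevant large sets need fresh evaluation, since all other pairs were examined in an earlier iteration; this costs $O(n T_f)$ per iteration. With at most $n$ iterations the subsequent work totals $O(n^2 T_f)$, matching the first iteration, so the whole algorithm runs in $O(n^2 T_f)$.

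The main obstacle is the accounting that keeps each later iteration at $O(n T_f)$ rather than $O(n^2 T_f)$: one must confirm that the function values already computed in earlier rounds remain valid for the sets that do not meet the newly contracted element. Concretely, a pair $X'$ not containing $x$ in the current instance corresponds to the same subset of the original ground set as in the previous instance (the contraction only merges elements inside the already-contracted blocks), so its value is unchanged and can be reused. Making this reuse precise — that the only genuinely new subsets to probe are those incident to the fresh element $x$ — is the one step requiring care; the rest follows routinely from Lemmas~\ref{extreme:lem} and~\ref{d<=3:cl}.
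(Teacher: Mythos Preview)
Your proposal is correct and follows essentially the same approach as the paper: repeatedly apply Lemma~\ref{d<=3:cl}, contract a semi-extreme pair when one exists (using Lemma~\ref{extreme:lem} to preserve a minimizer), and amortize so that the first round costs $O(n^2T_f)$ and each of the at most $n$ subsequent rounds costs $O(nT_f)$. The only point you flag as delicate---that after a contraction the sets not touching the new element $x$ correspond to exactly the same subsets of the original ground set and hence need no re-evaluation---is precisely the observation the paper uses to keep later iterations at $O(nT_f)$.
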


You might think that a similar property to Lemma \ref{d<=3:cl} holds  for a general $d$. 
However, the following instance indicates that this is not the case, since  no nontrivial semi-extreme set 
is small. In fact, the size of each nontrivial semi-extreme set is independent of $d$.  

\begin{example} \rm
Let $S$ be an arbitrary subset of $V$ with $|S|\geq 4$, and let $f: 2^V \to \{0,1,\ldots,7\}$
be a posimodular function defined as
\[
f(X)=\left\{ \begin{array}{ll}
0 & \mbox{ if }X=\emptyset,S \\
1 & \mbox{ if }X \subseteq S, |X|=1 \mbox{ or } |S|-1 \\
2 & \mbox{ if }X \subseteq S, 2 \leq |X| \leq |S|-2 \\
2 & \mbox{ if }X \cap S=\emptyset, |X|=1\\
3 & \mbox{ if }X \cap S=\emptyset, |X|\geq 2\\
4 & \mbox{ if }X \setminus S\neq \emptyset, |X \cap S|=1 \\
5 & \mbox{ if }X\setminus S\neq \emptyset, 2 \leq |X \cap S| \leq |S|-2\\
6 & \mbox{ if }X\setminus S\neq \emptyset,  |X \cap S| = |S|-1 \\
7 & \mbox{ if }X \setminus S\neq \emptyset, X \cap S = S. 
\end{array} \right.
\]
\noindent
We note that $\{S\} \cup \{ \{v\} \mid v \in X\} \cup \{S\setminus \{v\} \mid v \in S\}$ is the family of  all semi-extreme sets of $f$. Therefore, 
the size of each nontrivial semi-extreme set is either $|S|$ or $|S|-1$, which is 
independent of $d$. 
The posimodularity of  $f$ can be shown as follows.

Let $X$ and $Y$ be two subsets of $V$ with $X \cap Y\not=\emptyset$. 
If both $f(X)-f(X\setminus Y)$ and $f(Y)-f(Y\setminus X)$ are nonnegative, then 
(\ref{posi:eq}) clearly holds. 
We thus assume that at least one pair  $(Z_1,Z_2)$ of  $(X\setminus Y, X)$ and $(Y\setminus X,Y)$
satisfies one of the following conditions, where  $Z_1 \subseteq Z_2$. 

\medskip

(a) $Z_1 \subseteq S$, $|Z_1|=1$, and $Z_2=S$.

\smallskip

(b) $Z_1 \subseteq S$, $2 \leq |Z_1| \leq |S|-2$, and $Z_2=S$.

\smallskip

(c) $Z_1 \subseteq S$, $|Z_1|=|S|-1$, and $Z_2=S$.

\smallskip

(d) $Z_1,Z_2 \subseteq S$, $2 \leq |Z_1| \leq |S|-2$, and $|Z_2|=|S|-1$.

\medskip

If  $(Z_1,Z_2)=(X\setminus Y, X)$ satisfies
(a) (i.e., $X=S$ and $|X \setminus Y|=1$), then 
we have $f(X)=0$, $f(X\setminus Y)=1$, 
$|X \cap Y|=|S|-1$, and $(Y \setminus X) \cap S=\emptyset$. 
If $Y \setminus X=\emptyset$,
then we have $f(Y)=1$ and $f(Y\setminus X)=0$, which implies  (\ref{posi:eq}). 
On the other hand,  if  $Y \setminus X \neq \emptyset$, 
then we have $f(Y)=6$ and $f(Y\setminus X)\leq 3$,
 which again implies (\ref{posi:eq}).

If  $(Z_1,Z_2)=(X\setminus Y, X)$ satisfies
(b), then 
we have $f(X)=0$, $f(X\setminus Y)=2$, 
$2 \leq |X \cap Y| \leq |S|-2$, and $(Y \setminus X) \cap S=\emptyset$. 
Hence
if $Y \setminus X=\emptyset$,
then it holds that $f(Y)=2$ and $f(Y\setminus X)=0$. 
On the other hand, if  $Y \setminus X \neq \emptyset$, 
then we have $f(Y)=5$ and $f(Y\setminus X)\leq 3$. 
In either case,  (\ref{posi:eq})  is derived.

If  $(Z_1,Z_2)=(X\setminus Y, X)$ satisfies (c), 
then 
we have $f(X)=0$, $f(X\setminus Y)=1$, 
$|X \cap Y|=1$, and $(Y \setminus X) \cap S=\emptyset$.  
If $Y \setminus X=\emptyset$,
then it holds that $f(Y)=1$ and $f(Y\setminus X)=0$.
On the other hand,  if  $Y \setminus X \neq \emptyset$, 
then $f(Y)=4$ and $f(Y\setminus X)\leq 3$. 
In either case,  (\ref{posi:eq})  is derived.

If  $(Z_1,Z_2)=(X\setminus Y, X)$ satisfies
(d), then 
we have $f(X)=1$, $f(X\setminus Y)=2$, 
$1 \leq |X \cap Y| \leq |S|-3$, and $|(Y \setminus X) \cap S| \leq 1$. 
Hence
if $Y \setminus X=\emptyset$,
then $f(Y) \geq 1$ and $f(Y\setminus X)=0$.
If $Y \setminus X \neq \emptyset$ and $Y \setminus S=\emptyset$,
then  
 $f(Y) = 2$ and $f(Y\setminus X)=1$
by $Y \subseteq S$, $2\leq |Y|\leq |S|-2$, and
 $|Y \setminus X|=|(Y \setminus X) \cap S|=1$.  
If   $Y \setminus X \neq \emptyset$,  $Y \setminus S\neq \emptyset$,
and $(Y \setminus X) \cap S=\emptyset$,
then  $f(Y)\geq 4$ and $f(Y\setminus X)\leq 3$.
If   $Y \setminus X \neq \emptyset$,  $Y \setminus S\neq \emptyset$,
and $(Y \setminus X) \cap S \neq \emptyset$,
then  $f(Y)\geq 5$ and $f(Y\setminus X)= 4$
by $|Y\cap S|=|Y \cap X|+|(Y \setminus X) \cap S|\geq 2$
and $|(Y\setminus X)\cap S|=1$.
In either case,  (\ref{posi:eq})  is derived. 
\end{example}

\subsection{Case in which $d$ is general}
In this section, we propose an algorithm for the posimodular function minimization for general $d$.
Different from our algorithm for $d\leq 3$, it is not based on the contraction for semi-extreme sets. 
Instead, we focus on the following simple property derived from  posimodularity, 
and solve the problem by making use of dual Horn Boolean satisfiability problem. 
\begin{lemma}\label{key:lem}
For a nonnegative integer $d$, let $f:2^V\to \{0,1,\dots , d\}$ be a posimodular function.  
If there exist a subset $X$ of $V$ and an element $s\in V \setminus X$
such that 
\begin{equation}\label{key:eq}
f(X) \geq f(X\cup \{s\}),  
\end{equation}
then
any  subset $Y$ with $Y \cap X=\emptyset$
satisfies $f(Y)\geq f(Y \setminus \{s\})$.  
\end{lemma}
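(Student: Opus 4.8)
The plan is to derive the conclusion from a single application of the posimodular inequality (\ref{posi:eq}) to a well-chosen pair of sets; notably, the range restriction $D=\{0,1,\ldots,d\}$ will play no role, so the statement holds for any posimodular $f$. First I would dispose of the trivial case: if $s \notin Y$, then $Y \setminus \{s\} = Y$ and the desired inequality $f(Y) \geq f(Y \setminus \{s\})$ holds with equality. Hence I may assume $s \in Y$.

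The key step is to apply (\ref{posi:eq}) not to $X$ and $Y$ directly, but to the augmented pair $X \cup \{s\}$ and $Y$. Using the hypotheses $X \cap Y = \emptyset$, $s \notin X$, and $s \in Y$, I would compute $(X \cup \{s\}) \setminus Y = X$ and $Y \setminus (X \cup \{s\}) = Y \setminus \{s\}$, both of which are immediate from disjointness. Substituting these into (\ref{posi:eq}) yields
\[
f(X \cup \{s\}) + f(Y) \geq f(X) + f(Y \setminus \{s\}).
\]
Rearranging gives $f(Y) - f(Y \setminus \{s\}) \geq f(X) - f(X \cup \{s\})$, and the right-hand side is nonnegative by the hypothesis (\ref{key:eq}). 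This establishes $f(Y) \geq f(Y \setminus \{s\})$.

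The only nonroutine part is recognizing that one should augment $X$ by $s$ before invoking posimodularity; once the pair $(X \cup \{s\},\, Y)$ is fixed, the set-difference computations are forced by disjointness and the whole argument reduces to a single inequality manipulation. I do not anticipate a genuine obstacle here, and in particular the boundedness of the range is unnecessary for this lemma.
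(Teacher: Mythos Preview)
Your proposal is correct and essentially identical to the paper's own proof: both split off the trivial case $s\notin Y$ and then apply the posimodular inequality to the pair $X\cup\{s\}$ and $Y$, obtaining $f(X\cup\{s\})+f(Y)\geq f(X)+f(Y\setminus\{s\})$ and concluding via the hypothesis (\ref{key:eq}). Your observation that the range restriction is irrelevant is also accurate.
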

\begin{proof}
If $s \notin Y$, then we clearly have $f(Y)=f(Y\setminus\{s\})$.
On the other hand, if $v \in Y$, then 
by (\ref{posi:eq}), we have
$f(X \cup \{s\})+f(Y)\geq f(X)+f(Y\setminus \{s\})$, which proves the
 lemma. 
\end{proof}

Let us consider computing a  locally minimal  minimizer  $X^*$ of $f$. 
Here a subset $X^*$ is called {\em locally minimal} if $f(X^*) < f(X^* \setminus \{v\})$ holds   for any $v\in X^*$. 
We note that a locally minimal minimizer $X^*$ always exists if no singleton $\{v\}$ is a minimizer of $f$, and such an $X^*$ satisfies $|X^*|\geq 2$.  

Let $X$ be a  subset  of $V$,  and  let $s$ be an element  in $V \setminus X$
that satisfies (\ref{key:eq}). 
Then Lemma~\ref{key:lem} implies that
any locally minimal subset $X^*$ must satisfy $s \notin X^*$  whenever $X^*\cap X=\emptyset$. 
To represent it as a Boolean formula, let us introduce propositional variables $x_v$, $v\in V$, and we regard
 a Boolean vector $x \in \{0,1\}^V$  as a subset $S_x$ such that $S_x=\{v\in V\mid x_v=1\}$, i.e., 
$x$ is the characteristic vector of $S_x$. 
Then it can be represented as  
\begin{equation}\label{key2:eq}
\mbox{$x_s=0$
whenever $x_v =0$ for all $v \in X$},     
\end{equation}
which is equivalent to satisfying the following dual Horn clause 
\begin{equation}\label{key3:eq}
 \bigvee_{v \in X}x_v \vee \overline {x}_s. 
\end{equation}


If you have many  pairs of $X$ and $s$ that satisfy  (\ref{key:eq}), 
then their corresponding  rules 
 (\ref{key3:eq})  reduce the search space for finding a locally minimal minimizer of $f$. 
Note that the rules can be represented as a dual Horn CNF, 
and hence the satisfiability can be solved in linear time and
 all satisfiable assignments can be generated with linear delay 
  (i.e., the time interval between two consecutive output is bounded 
in linear time (in the input size))  \cite{pretolani93linear}.  
However,  the number of such pairs are in general exponential in $n$,
and hence we need to find a subfamily $\cal P$ of such pairs $(X,s)$ 
such that (1) the size $|{\cal P}|$ is polynomial in $n$ (for a constant $d$) and 
(2) the corresponding dual-Horn CNF has polynomially many satisfiable assignments.

\begin{definition}
Let $X$ be a subset of $V$ with $k=|X|$. 
We say that {\em $X$  
is reachable $($from $\emptyset)$}
if there exists a chain $X_0\,(=\emptyset) \subsetneq X_1 \subsetneq \dots \subsetneq X_k\,(=X)$
such that 
$f(X_i) > f(X_{i-1})$   for all $i=1,2,\ldots, k$,
and 
 {\em unreachable} otherwise.
\end{definition}

\noindent
By definition, $\emptyset$ is reachable.
An unreachable set $U$ is called {\em minimal} if  any proper subset of it is reachable.  
 Let ${\cal U}$ be the family of  minimal unreachable sets $U$. 
From the definition of reachability, we have the following lemma. 

\begin{lemma}\label{unreach1:lem}
For any minimal unreachable set $U \in {\cal U}$,
we have $f(U)\leq f(U \setminus \{u\})$ for all $u \in U$. 
\end{lemma}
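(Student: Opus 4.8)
The plan is to argue by contradiction, exploiting the minimality of $U$ together with the strict-increase condition built into the definition of reachability; posimodularity will not even be needed, since the statement is purely about the reachability structure. Fix a minimal unreachable set $U \in {\cal U}$, write $k = |U|$, and pick any element $u \in U$. The first thing I would use is that $U \setminus \{u\}$ is a \emph{proper} subset of $U$, so by the minimality of $U$ it is reachable. Hence there is a chain $\emptyset = X_0 \subsetneq X_1 \subsetneq \cdots \subsetneq X_{k-1} = U \setminus \{u\}$ with $f(X_i) > f(X_{i-1})$ for every $i = 1, \ldots, k-1$.

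Next I would suppose, toward a contradiction, that the claimed inequality fails for this particular $u$, i.e.\ that $f(U) > f(U \setminus \{u\})$. The idea is then simply to append $U$ to the above chain as its final term: setting $X_k = U$ yields $\emptyset = X_0 \subsetneq \cdots \subsetneq X_{k-1} \subsetneq X_k = U$, and the only new comparison to verify is the last step, which reads $f(X_k) = f(U) > f(U \setminus \{u\}) = f(X_{k-1})$ by our assumption. Thus the extended chain witnesses the reachability of $U$, contradicting the fact that $U$ is unreachable. Therefore $f(U) \leq f(U \setminus \{u\})$, and since $u \in U$ was arbitrary the lemma follows.

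I expect this argument to be essentially immediate, as it only unwinds the two definitions (reachability and minimal unreachability). There is no genuine obstacle; the only point requiring a moment's care is that $U \setminus \{u\}$ is truly a proper subset so that minimality can be invoked. This is fine as long as $U$ is nonempty, which holds because $\emptyset$ is reachable by definition and hence is never a minimal unreachable set, so every $U \in {\cal U}$ admits an element $u$ to remove.
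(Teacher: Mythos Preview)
Your proof is correct and follows essentially the same approach as the paper: both argue that $U\setminus\{u\}$ is reachable by minimality, and that if $f(U)>f(U\setminus\{u\})$ held then extending the witnessing chain by one step would make $U$ reachable, a contradiction. The paper's version is more terse, but the logic is identical.
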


\begin{proof}
By definition, $U\setminus \{u\}$ is reachable for all $u \in U$.
Hence, if $f(U)> f(U \setminus \{u \})$ holds for some $u  \in U$, 
then it turns out that $U$ is reachable, which is a contradiction. 
\end{proof}

\noindent
\begin{lemma}\label{opt:lem}
Let $X^*$ be a locally minimal subset  of a posimodular function $f$.  
Then the characteristic vector  of $X^*$ satisfies the dual Horn CNF   $\varphi_f$ defined by 
\begin{equation}\label{cnf:eq}
 \varphi_f=
 \bigwedge_{U \in {\cal U}} \bigwedge_{s \in U}
(\bigvee_{u \in U\setminus \{s\}}\!\!x_{u} \vee \overline {x}_s)
\end{equation}%
\end{lemma}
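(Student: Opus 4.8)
The plan is to verify that the characteristic vector $\chi^*$ of $X^*$ satisfies each clause of $\varphi_f$ separately, and for this the first step is to rewrite the failure of a single clause in set-theoretic language. Fix a minimal unreachable set $U \in {\cal U}$ and an element $s \in U$, and consider the associated clause $\bigvee_{u \in U \setminus \{s\}} x_u \vee \overline{x}_s$. This clause is falsified by $\chi^*$ exactly when $x_s = 1$ and $x_u = 0$ for every $u \in U \setminus \{s\}$, that is, exactly when $X^* \cap U = \{s\}$. So it suffices to exclude the possibility $X^* \cap U = \{s\}$ for each such pair $(U,s)$.

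The key observation is that the pair $\bigl(U \setminus \{s\},\, s\bigr)$ meets the hypothesis of Lemma~\ref{key:lem}. Taking $X = U \setminus \{s\}$, we have $s \in V \setminus X$ and $X \cup \{s\} = U$, and Lemma~\ref{unreach1:lem} gives $f(U) \leq f(U \setminus \{s\})$, which is precisely $f(X) \geq f(X \cup \{s\})$, i.e.\ inequality~(\ref{key:eq}). Hence Lemma~\ref{key:lem} applies and yields that every subset $Y$ with $Y \cap (U \setminus \{s\}) = \emptyset$ satisfies $f(Y) \geq f(Y \setminus \{s\})$.

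Now I would argue by contradiction. Suppose $\chi^*$ falsifies the clause, so that $X^* \cap U = \{s\}$; in particular $s \in X^*$ and $X^* \cap (U \setminus \{s\}) = \emptyset$. Applying the conclusion of Lemma~\ref{key:lem} with $Y = X^*$ then gives $f(X^*) \geq f(X^* \setminus \{s\})$. But the local minimality of $X^*$ at the element $s \in X^*$ asserts $f(X^*) < f(X^* \setminus \{s\})$, a contradiction. Therefore $\chi^*$ satisfies the clause indexed by $(U,s)$, and since this pair was arbitrary, $\chi^*$ satisfies every conjunct of $\varphi_f$.

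I do not expect a serious obstacle, as the statement is a clean composition of the two preceding lemmas. The only point requiring care is the bookkeeping in the first paragraph: correctly identifying that a clause of $\varphi_f$ is falsified precisely when $X^* \cap U$ is the singleton $\{s\}$, and then selecting the right instance $X = U \setminus \{s\}$ so that the reachability estimate of Lemma~\ref{unreach1:lem} feeds directly into the hypothesis of Lemma~\ref{key:lem}. Once this correspondence is in place, the contradiction with local minimality is immediate.
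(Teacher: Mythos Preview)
Your proof is correct and follows essentially the same approach as the paper: the paper's proof simply states that Lemma~\ref{unreach1:lem} together with the discussion after Lemma~\ref{key:lem} yields the result, and your argument spells out precisely that combination---using Lemma~\ref{unreach1:lem} to verify the hypothesis~(\ref{key:eq}) of Lemma~\ref{key:lem} with $X = U \setminus \{s\}$, and then deriving a contradiction with local minimality.
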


\begin{proof}
 Lemma \ref{unreach1:lem}, together with the discussion after  Lemma \ref{key:lem} implies the lemma.  
\end{proof}

\noindent
Based on the lemma, we have the following algorithm for the posimodular function minimization. 

\medskip

\noindent
{\bf Algorithm}  {\sc  MinPosimodular($f$)}

\smallskip

\noindent
{\bf Step 1}. Compute a singleton $\{v^*\}$ with minimum $f(v^*)$ (i.e.,  $f(v^*)=\min\{f(v)\mid v \in V\}$). 

\smallskip

\noindent
{\bf Step 2}. Compute a subset $S_{x^*}$ with minimum 
 $f(S_{x^*})$ among the sets $S_x$ such that 
$|S_x|\geq 2$ and $\varphi_f(x)=1$.

\smallskip

\noindent
{\bf Step 3}. Output  $\{v^*\}$, if $f(v^*) \leq f(S_{x^*})$, and $S_{x^*}$, otherwise. Halt. \qed


\medskip

In the remaining part of this section, we show that  $\varphi_f$ has polynomially many clauses and satisfiable assignments in $n$ (if $d$ is bounded by a constant). 


We first show basic facts for minimal unreachable sets, where a subset $I$ of $V$ is called {\em independent} of $\cal U$ if it contains no $U \in {\cal U}$.

\begin{lemma}\label{unreach2:lem}
For a posimodular function $f:2^V\to \{0,1, \dots , d\}$, we have the following three statements.
\begin{description}
\setlength{\itemindent}{-0.7cm}
\setlength{\parskip}{0cm} 
\item[$(${\it i}$)$] $1 \leq |U| \leq d+1$ holds for all $U  \in {\cal U}$.
\item[$(${\it ii}$)$] $|I|\leq d$ holds for all independent sets $I$ of ${\cal U}$. 
\item[$(${\it iii}$)$] If a singleton $\{u\}$ is contained in ${\cal U}$, then $f(u)=0$, and hence
$\{u\}$ is a minimizer of $f$. 
\end{description}
\end{lemma}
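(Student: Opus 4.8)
The plan is to derive all three statements from a single elementary counting observation about reachability. Namely, if $X$ is reachable with $|X|=k$, then a witnessing chain $X_0 \subsetneq X_1 \subsetneq \cdots \subsetneq X_k$ yields $k+1$ function values $f(X_0) < f(X_1) < \cdots < f(X_k)$ that are strictly increasing, hence pairwise distinct, integers lying in $\{0,1,\dots,d\}$. Since that range has only $d+1$ elements, we get $k+1 \leq d+1$, i.e.\ \emph{every reachable set has size at most $d$}. This bound, which does not even require posimodularity, is the engine for statements (i) and (ii).

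For (i), the lower bound is immediate: $\emptyset$ is reachable while $U$ is not, so $U \neq \emptyset$ and $|U| \geq 1$. For the upper bound I invoke the minimality of $U$. Each $U \setminus \{u\}$ with $u \in U$ is a proper subset and is therefore reachable, so by the observation above $|U|-1 = |U \setminus \{u\}| \leq d$, giving $|U| \leq d+1$.

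For (ii), I first argue that every independent set is reachable. Suppose, for contradiction, that an independent set $I$ were unreachable, and choose an inclusion-minimal unreachable subset $U \subseteq I$. Every proper subset of $U$ must then be reachable---otherwise it would be a strictly smaller unreachable subset of $I$, contradicting the minimality of $U$---so $U \in {\cal U}$; but $U \subseteq I$ violates the independence of $I$. Hence $I$ is reachable, and the size observation yields $|I| \leq d$. The only nontrivial point here is that any unreachable set contains a minimal unreachable one, which is exactly this inclusion-minimal selection.

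For (iii), let $\{u\} \in {\cal U}$. Its sole proper subset $\emptyset$ is reachable, so the only chain that could certify reachability of $\{u\}$ is $\emptyset \subsetneq \{u\}$, which requires $f(u) > f(\emptyset) = 0$. Unreachability of $\{u\}$ therefore forces $f(u) \leq 0$, while posimodularity via (\ref{eq--1}) gives $f(u) \geq f(\emptyset) = 0$; together these yield $f(u) = 0$. Since $f(X) \geq 0$ for all $X$ by the same inequality, $\{u\}$ attains the global minimum and is thus a minimizer of $f$. I expect the bridge in (ii)---that independence forces reachability---to be the main thing to get right, while the rest reduces to directly counting the strictly increasing integer values along a chain.
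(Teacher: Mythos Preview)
Your proof is correct and follows the same approach as the paper: both rest on the observation that every reachable set has size at most $d$, since a witnessing chain produces $|X|+1$ strictly increasing values in $\{0,1,\dots,d\}$. The paper's own proof is a three-line sketch (``any reachable set $R$ has cardinality $|R|$ at most $d$; this implies (i) and (ii); (iii) follows from $f(\emptyset)=0$''), and what you have written is exactly the natural unpacking of that sketch---in particular, your explicit argument that an independent set must be reachable (via an inclusion-minimal unreachable subset) is the step the paper leaves implicit in deriving (ii).
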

\begin{proof}
Since $\{0,1,\ldots,d\}$ is  the range of $f$, any reachable set $R$ has cardinality $|R|$ at most $d$.
This implies that $(i)$ and $(ii)$.
$(iii)$  follows from  $f(\emptyset)=0$ by our assumption.  
\end{proof}

Lemma \ref{unreach2:lem} $(i)$ implies that $|{\cal U}|=O({n \choose d+1})$ if $d< n/2$, and 
$O({n \choose n/2})$ otherwise. 
Hence we have  
\begin{eqnarray}
\label{eq-abc1}
|{\cal U}|&=&O(n^{d+1}/d).
\end{eqnarray}

Let us then analyze the number of satisfiable assignments of $\varphi_f$. 
In order to make the discussion simpler,  consider a definite Horn CNF $\varphi_f(\overline{x})$, 
where $\overline{x}$ denotes the complement  of $x$.
Notice that a subset $S_x$  with $\varphi_f(\overline{x})=1$
is a candidate of the complement of a locally minimal 
minimizer of $f$. 
For a definite Horn CNF $\varphi$ and a subset $T$ of $V$, 
the following algorithm called {\em forward chaining
 procedure {\rm (}FCP{\rm )}} has been proposed to compute satisfiable assignments of $\varphi$ \cite{arias09canonical,hammer93optimal}.


\medskip

\noindent
{\bf Procedure}  {\sc  FCP}($\varphi$; $T$)

\smallskip

\noindent
{\bf Step 0}.   Let $Q:=T$. 

\smallskip

\noindent
{\bf Step 1}. While there exists a clause $c$ in $\varphi$ such that $N(c) \subseteq Q$ and
$P(c)\cap Q=\emptyset$ do 

\hspace*{2.1cm}$Q:=Q \cup P(c)$.

\smallskip

\noindent
{\bf Step 2}. Output $Q$  as {\sc  FCP}($\varphi$; $T$), and halt.

\medskip
\noindent
It is not difficult to see that $T\subseteq {\sc  FCP}(\varphi; T)$ holds for any subset $T$, 
and  ${\sc  FCP}(\varphi; T) \subseteq {\sc  FCP}(\varphi; T')$ holds if $T \subseteq T'$. 
Moreover, for a definite Horn CNF $\varphi$, 
 it is known \cite{arias09canonical,hammer93optimal} that 
$T$ corresponds to a satisfiable assignment of $\varphi$ (i.e., the characteristic vector of  $T$ is a satisfiable assignment of $\varphi$)  if and only if $T= {\sc  FCP}(\varphi; T)$. 
This implies that 
for any subset $T$, ${\sc  FCP}(\varphi; T)$ corresponds to a satisfiable assignment of $\varphi$, and 
for any satisfiable assignment $\alpha$ of $\varphi$, 
there exists a subset $T$ such that ${\sc  FCP}(\varphi; T)$ corresponds to $\alpha$ (i.e., . $S_\alpha={\sc  FCP}(\varphi; T)$). 

We now claim that for any satisfiable assignment $\alpha$ of $\varphi_f(\overline{x})$, there exists a subset $T$ such that $|T|\leq d$ and $S_\alpha={\sc  FCP}(\varphi_f(\overline{x}); T)$, 
which implies 
the number of satisfiable assignments of $\varphi_f$ is bounded by $\sum_{i=0}^d{n \choose i}$. 

For a satisfiable assignment $\alpha$ of 
$\varphi(\overline{x})$, 
let 
${\cal U}_\alpha=\{U \subseteq {\cal U} \mid U \subseteq S_\alpha\}$,
and let $I \subseteq S_{\alpha}$ be an independent set of ${\cal U}_\alpha$ which is maximal in $S_{\alpha}$ (i.e.,  $I \cup \{v\}$ is dependent of  ${\cal U}_\alpha$ for all 
$v \in  S_{\alpha} \setminus I$). 
Since $\emptyset$ is independent of ${\cal U}_\alpha$, such an $I$ must exist. 

\begin{lemma}\label{fcp:cl}
For a posimodular function $f:2^V \to \{0,1,\dots d\}$, let  
$\alpha$ be a satisfiable  assignment of $\varphi_f(\overline{x})$. 
 Let $I$ be defined as above.
Then we have $S_\alpha={\sc  FCP}(\varphi_f(\overline{x}); I)$.
\end{lemma}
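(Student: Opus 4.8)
The plan is to prove the two inclusions $S_\alpha \supseteq {\sc FCP}(\varphi_f(\overline{x}); I)$ and $S_\alpha \subseteq {\sc FCP}(\varphi_f(\overline{x}); I)$ separately; write $R={\sc FCP}(\varphi_f(\overline{x}); I)$ for brevity. The first inclusion is immediate from the two facts recorded just before the lemma. Since $I \subseteq S_\alpha$, monotonicity of FCP gives $R \subseteq {\sc FCP}(\varphi_f(\overline{x}); S_\alpha)$, and since $\alpha$ is a satisfiable assignment of $\varphi_f(\overline{x})$, the set $S_\alpha$ is a fixed point, i.e. $S_\alpha={\sc FCP}(\varphi_f(\overline{x}); S_\alpha)$. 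Hence $R \subseteq S_\alpha$.

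For the reverse inclusion I would argue by contradiction. Suppose some $v \in S_\alpha \setminus R$. Because $I \subseteq R$, we have $v \in S_\alpha \setminus I$, so maximality of $I$ in $S_\alpha$ yields a set $U \in {\cal U}_\alpha$ with $U \subseteq I \cup \{v\}$; as $I$ is independent of ${\cal U}_\alpha$ it cannot contain $U$, so necessarily $v \in U$ and therefore $U \setminus \{v\} \subseteq I$. Now inspect the clause of $\varphi_f$ indexed by this $U$ and the element $s=v$: after the substitution $x \mapsto \overline{x}$ it becomes the definite Horn clause whose body is $N(c)=U\setminus\{v\}$ and whose head is $P(c)=\{v\}$. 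Since $U\setminus\{v\} \subseteq I \subseteq R$ and $v \notin R$, this clause satisfies $N(c) \subseteq R$ and $P(c) \cap R = \emptyset$, so the while-loop of {\sc FCP} could still fire on it and enlarge $R$ — contradicting that $R$ is the terminal output of {\sc FCP}$(\varphi_f(\overline{x}); I)$. Thus $S_\alpha \setminus R = \emptyset$, giving $S_\alpha \subseteq R$, and combining the two inclusions proves $S_\alpha = R$.

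The main obstacle is not any deep estimate but the bookkeeping in the contradiction step, where the definitions must be used exactly as stated. One must check that the $U$ supplied by maximality genuinely lies in ${\cal U}_\alpha \subseteq {\cal U}$, so that $\varphi_f$ really contains the clause indexed by the pair $(U,v)$, and one must correctly track how passing from $\varphi_f$ to the definite Horn form $\varphi_f(\overline{x})$ swaps the roles of the literals, so that the clause's head and body come out as $\{v\}$ and $U\setminus\{v\}$ respectively. Only with these identifications does the firing condition $N(c)\subseteq R$, $P(c)\cap R=\emptyset$ hold, and everything else follows routinely from the fixed-point characterization of {\sc FCP} assumed earlier.
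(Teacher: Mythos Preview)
Your proof is correct and follows essentially the same route as the paper's: both use the monotonicity and fixed-point characterization of {\sc FCP} for the inclusion $R\subseteq S_\alpha$, and for $S_\alpha\subseteq R$ both invoke maximality of $I$ to produce, for each $v\in S_\alpha\setminus I$, a set $U\in{\cal U}_\alpha$ with $U\setminus\{v\}\subseteq I$ and hence a definite Horn clause with body contained in $I$ and head $\{v\}$. The only cosmetic difference is that the paper argues directly that each such $v$ is added during {\sc FCP}, whereas you phrase the same observation as a contradiction against $R$ being terminal.
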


\begin{proof}
If $I= S_{\alpha}$, we have $S_\alpha={\sc  FCP}(\varphi_f(\overline{x}); S_\alpha)$, since $\alpha$ is a satifiable assignment of $\varphi_f(\overline{x})$. 
Assume that $S_\alpha \setminus I$ is not empty.
Then for each element $v \in S_\alpha \setminus I$, 
$I \cup \{v\}$
is dependent of ${\cal U}_\alpha$, i.e.,
some $U \in {\cal U}_\alpha$ satisfies $U  \setminus I =\{v\}$.
This implies that $\varphi_f(\overline{x})$ contains a clause $c$ such that 
$P(c)=\{v\}$ and $N(c)=U\setminus \{v\} \,(\subseteq I)$.
Thus {\sc  FCP}$(\varphi_f(\overline{x}); I)$ contains $v$ for all  $v \in S_\alpha \setminus I$, which implies $S_{\alpha} \subseteq {\sc  FCP}(\varphi_f(\overline{x}); I)$. 
Since $I \subseteq S_\alpha$ and $\alpha$ is satisfiable for $\varphi_f(\overline{x})$, 
we have $S_\alpha={\sc  FCP}(\varphi_f(\overline{x}); I)$. 
\end{proof}

\begin{lemma}\label{opt:lemmaa}
For a posimodular function  $f:2^V \to \{0,1,\dots d\}$, it holds that  $|\{ x \in \{0,1\}^n \mid \varphi_f(x)=1\}| \leq \sum_{i=0}^d
 {n \choose i}\,(=O(n^d))$.
\end{lemma}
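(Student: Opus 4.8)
The plan is to count the satisfying assignments of $\varphi_f$ by passing to the definite Horn CNF $\varphi_f(\overline{x})$ and then invoking the forward-chaining characterization established in Lemma~\ref{fcp:cl}.

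First I would record the elementary bijection between the two formulas. The CNF $\varphi_f(\overline{x})$ is obtained from $\varphi_f$ by substituting $\overline{x}_v$ for $x_v$, so a vector $\alpha$ satisfies $\varphi_f(\overline{x})$ if and only if its complement $\overline{\alpha}$ satisfies $\varphi_f$. Since complementation is a bijection on $\{0,1\}^n$, the two formulas have exactly the same number of satisfying assignments, and it therefore suffices to bound the number of satisfying assignments of $\varphi_f(\overline{x})$.

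Next I would count these via {\sc FCP}. For each satisfying assignment $\alpha$ of $\varphi_f(\overline{x})$, take the set $I \subseteq S_\alpha$ introduced just before Lemma~\ref{fcp:cl} (a maximal-in-$S_\alpha$ independent set of ${\cal U}_\alpha$); Lemma~\ref{fcp:cl} then gives $S_\alpha = {\sc FCP}(\varphi_f(\overline{x}); I)$. The crucial point is that $|I| \le d$: because $I \subseteq S_\alpha$, any $U \in {\cal U}$ with $U \subseteq I$ would satisfy $U \subseteq S_\alpha$ and hence belong to ${\cal U}_\alpha$, contradicting the independence of $I$ with respect to ${\cal U}_\alpha$; thus $I$ is independent of the whole family ${\cal U}$, and Lemma~\ref{unreach2:lem}(ii) yields $|I|\le d$. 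Since {\sc FCP} is deterministic, the assignment $\alpha$ is recovered from $I$ alone through $S_\alpha = {\sc FCP}(\varphi_f(\overline{x}); I)$, so the map sending each $\alpha$ to a choice of such an $I$ is injective. Consequently the number of satisfying assignments is at most the number of subsets of $V$ of size at most $d$, namely $\sum_{i=0}^d {n \choose i}$, and this quantity is $O(n^d)$ for fixed $d$.

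The main obstacle, though it is light given the lemmas already available, is the clean justification of $|I|\le d$: one must observe that independence with respect to the restricted family ${\cal U}_\alpha$ upgrades to independence with respect to the full family ${\cal U}$, which holds precisely because $I$ lies inside $S_\alpha$ — the defining condition of membership in ${\cal U}_\alpha$. The remaining ingredients, the complementation bijection and the injectivity coming from the determinism of {\sc FCP}, are routine once this step is in place.
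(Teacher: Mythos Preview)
Your proof is correct and follows essentially the same route as the paper's: pass to the definite Horn formula $\varphi_f(\overline{x})$, invoke Lemma~\ref{fcp:cl} to represent each satisfying assignment as ${\sc FCP}(\varphi_f(\overline{x}); I)$, and bound $|I|\le d$ via Lemma~\ref{unreach2:lem}(ii). You have in fact spelled out two points the paper leaves implicit --- why independence of $I$ with respect to ${\cal U}_\alpha$ upgrades to independence with respect to ${\cal U}$, and why the determinism of {\sc FCP} yields the counting bound --- so your write-up is a strictly more detailed version of the paper's argument.
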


\begin{proof}
By Lemma \ref{fcp:cl}, for each satisfiable assignment $\overline{\alpha}$ of $\varphi_f$, we have an independent set $I$ of ${\cal U}_\alpha$ such that 
 $S_\alpha={\sc  FCP}(\varphi_f(\overline{x}); I)$. 
Since $I$ is also independent of ${\cal U}$, $|I|\leq d$ holds by Lemma \ref{unreach2:lem}, which completes the proof. 
\end{proof}

\begin{remark}
\label{remark1}
\rm
 Lemma \ref{fcp:cl} indicates that 
Step 2 of {\sc MinPosimodular($f$)} can be executed by applying
 {\sc  FCP} for all subsets $T$ of $V$ with
$|T| \leq d$.
For each $T$, ${\sc  FCP}(\varphi_f(\overline{x}); T)$ can be computed from $\cal U$ in $O(d |{\cal U}|)=O(n^{d+1})$ time.
Thus, after computing $\cal U$, 
Step 2 of {\sc MinPosimodular($f$)} can be implemented to run in 
$O(n^d(n^{d+1}+T_f))=O(n^{2d+1}+n^dT_f)$ time.
\end{remark}

\nop{
\begin{remark}
\rm
The number of oracle calls of
{\sc MinPosimodular($f$)} is $O(n^d)$.
The algorithm needs to find all minimal unreachable sets $U \in {\cal U}$.
By Lemma~\ref{unreach2:lem}(i), all  $U \in {\cal U}$ satisfy
$|U|\leq d+1$. 
However, 
 we can check whether each subset $U'$ with $|U'|=d+1$
is a minimal unreachable set or not by checking whether all subset $U''$
of $U'$ with $|U''|=d$ is reachable; we need not 
query the value of $f(U)$ for any $U \subseteq V$ with $|U|=d+1$. 
\end{remark}
}
Summarizing  the  arguments given so far, we have the following theorem.

\begin{theorem}
For general $d$, the posimodular function minimization  
 can be solved in $O(n^dT_f+n^{2d+1})$ time.
\end{theorem}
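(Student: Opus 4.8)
The plan is to show that Algorithm {\sc MinPosimodular}$(f)$ is correct and runs within the stated bound; the two halves are essentially independent. For correctness I would argue by cases according to whether some singleton is a global minimizer. If a singleton is a minimizer, then Step~1 produces a singleton $\{v^*\}$ attaining the optimal value, while Step~2 returns a nonempty set $S_{x^*}$ whose value is at least the optimum, so Step~3 outputs a minimizer. If no singleton is a minimizer, then a locally minimal minimizer $X^*$ with $|X^*|\geq 2$ exists; by Lemma~\ref{opt:lem} its characteristic vector satisfies $\varphi_f$, so $X^*$ lies in the search space $\{S_x \mid |S_x|\geq 2,\ \varphi_f(x)=1\}$ of Step~2, whence $f(S_{x^*})\leq f(X^*)$ equals the optimum. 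In either case Step~3 returns a minimizer.

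For the running time I would split the cost into (i) computing the family ${\cal U}$ of minimal unreachable sets and (ii) running Steps~1--3 given ${\cal U}$. Step~1 costs $O(nT_f)$, and by Remark~\ref{remark1}, once ${\cal U}$ is available Step~2 runs in $O(n^d(n^{d+1}+T_f))=O(n^{2d+1}+n^dT_f)$ time: there are $O(n^d)$ candidate satisfiable assignments (Lemma~\ref{opt:lemmaa}), each obtained by applying {\sc FCP} to a seed $T$ with $|T|\leq d$ (Lemma~\ref{fcp:cl}), each {\sc FCP} call costing $O(d|{\cal U}|)=O(n^{d+1})$ by (\ref{eq-abc1}), and each resulting set needing one oracle call to evaluate $f$. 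Thus the dominant remaining task is bounding the cost of computing ${\cal U}$.

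To compute ${\cal U}$ I would determine reachability bottom-up by cardinality. Since $f(\emptyset)=0$ and $f$ strictly increases along a witnessing chain, every reachable set of size $s$ has $f$-value at least $s$; together with $f\leq d$ this shows all reachable sets have size at most $d$, and that a set $X$ with $|X|=s\leq d$ is reachable iff some $v\in X$ has $X\setminus\{v\}$ reachable and $f(X)>f(X\setminus\{v\})$. Querying $f$ on all sets of size at most $d$ ($O(n^d)$ calls, hence $O(n^dT_f)$ time) and propagating this recurrence yields the reachability status of every such set in $O(dn^d)$ additional time. The observation that keeps the oracle cost at $O(n^d)$ is that every set of size $d+1$ is automatically unreachable: any reachable subset of size $d$ already has $f$-value exactly $d$, so the strict-increase condition can never be met. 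Hence no size-$(d+1)$ set need be queried; such a set lies in ${\cal U}$ exactly when all of its proper subsets, all of size $\leq d$ and already classified, are reachable, while the minimal unreachable sets of size $\leq d$ are read directly off the reachability table (the relevant cardinality bounds being Lemma~\ref{unreach2:lem}). This identifies ${\cal U}$ in $O(n^dT_f+n^{d+1})$ time.

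Combining the pieces gives total time $O(n^dT_f+n^{2d+1})$. The step demanding the most care is the computation of ${\cal U}$: one must verify the cardinality bounds on reachable and minimal unreachable sets and, crucially, that size-$(d+1)$ sets can be classified purely combinatorially, so that the number of oracle calls stays at $O(n^d)$ rather than growing to $O(n^{d+1})$.
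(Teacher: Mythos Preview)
Your proposal is correct and follows essentially the same approach as the paper: correctness via Lemma~\ref{opt:lem}, the running time of Step~2 via Remark~\ref{remark1}, and the computation of ${\cal U}$ in $O(n^dT_f+n^{d+1})$ time using the key observation that sets of size $d+1$ are automatically unreachable and can be classified without further oracle calls. You spell out the correctness argument and the bottom-up reachability computation in more detail than the paper does, but the ideas and the complexity accounting coincide.
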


\begin{proof}
Let us analyze the complexity of {\sc MinPosimodular($f$)}. 
Clearly, Steps 1 and 3 can be executed in $O(nT_f)$ and $O(n)$ time, respectively. 
As for  Step 2, $\cal U$ can be computed in $O(n^dT_f+n^{d+1})$ time. 
Here we remark that it is not necessary to  query the value of $f(U)$ for any $U \subseteq V$ with $|U|=d+1$, if we know  $f(W)$ for all $W \subseteq V$ with $|W| \leq d$. 
This together with Remark \ref{remark1} implies that 
Step 2 requires $O(n^dT_f+n^{2d+1})$ time.

Therefore, in total,  {\sc MinPosimodular($f$)} requires $O(n^dT_f+n^{2d+1})$ time. 
\end{proof}

\subsection{Corollaries of our algorithmical results}\label{byproduct:subsec}

Let us first consider generating all minimizers of a posimodular function $f:2^V\to \{0,1,\dots d\}$. 
Note that $f$ might have exponentially many minimizers. 
In fact, if $f=0$, then we have $2^n-1$ minimizers.
We thus consider output sensitive algorithms for it. 

It follows from Lemma \ref{opt:lem}
 that {\sc MinPosimodular($f$)} finds all locally minimal minimizers of $f$.
Let $S$ be a minimizer of $f$ which is not locally minimal. 
By definition of locally minimality, there exists a chain $T_0\,(=T)
 \subsetneq T_1 \subsetneq \dots \subsetneq T_k\,(=S)$ from some locally
 minimal minimizer $T$ of $f$ such that for all $i=1, \dots ,k$, $T_i$
 is a minimizer of $f$ and $|T_{i}\setminus T_{i-1}|=1$. 
Therefore, after generating all locally minimal minimizers of $f$, we check whether 
$T \cup \{v\}$ is a minimizer of $f$ for each minimizer $T$ of $f$ and $v \not\in T$.
This implies that  all (not only locally minimal) minimizers of $f$
can be generated in $O(nT_f)$ delay after applying
{\sc MinPosimodular($f$)} once.

\begin{corollary}
For a posimodular function $f:2^V\to \{0,1,\dots, d\}$, 
we can generate all minimizers of $f$ in $O(nT_f)$ delay, 
after $O(n^dT_f+n^{2d+1})$ time to compute the first minimizer of $f$.  
\end{corollary}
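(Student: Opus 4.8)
The plan is to split the task into a preprocessing phase that produces a small set of ``seed'' minimizers, followed by an enumeration phase that grows every minimizer from a seed by single-element additions. For the preprocessing phase I would run a slightly augmented version of {\sc MinPosimodular}$(f)$: by Lemma~\ref{opt:lem} every locally minimal minimizer satisfies $\varphi_f$, by Lemma~\ref{opt:lemmaa} the CNF $\varphi_f$ has only $O(n^d)$ satisfiable assignments, and by Lemma~\ref{fcp:cl} together with Remark~\ref{remark1} all of them can be listed via {\sc FCP} from the $O(n^d)$ starting sets of size at most $d$ in $O(n^dT_f+n^{2d+1})$ time. Evaluating $f$ on each such assignment, and on each singleton in Step~1, yields the optimal value $\mu$, and the seeds I keep are all the listed sets $S_x$ with $|S_x|\ge 2$ and $f(S_x)=\mu$, together with every singleton $\{v\}$ with $f(v)=\mu$. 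This is exactly the $O(n^dT_f+n^{2d+1})$ ``first minimizer'' cost promised in the statement.

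The correctness of the enumeration phase rests on one structural fact, which I would isolate as a lemma: every minimizer $S$ can be reduced to one of the seeds above by repeatedly deleting a single element while remaining a minimizer. Its proof is short: if $S$ is a minimizer with $|S|\ge 2$ that is not locally minimal, then $f(S\setminus\{v\})\le f(S)$ for some $v\in S$, and since $S\setminus\{v\}\neq\emptyset$ and $f(S\setminus\{v\})\ge f(S)=\mu$ we get $f(S\setminus\{v\})=\mu$, so $S\setminus\{v\}$ is again a minimizer. Iterating strictly decreases $|S|$, so the process halts either at a locally minimal minimizer or at a singleton, and in the latter case that singleton has value $\mu$ and is therefore one of the singleton seeds. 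Reading this chain backwards shows that every minimizer is obtained from some seed by successively adding one element, with every intermediate set a minimizer.

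For the enumeration itself I would perform a search over the set of minimizers, where the neighbours of a minimizer $T$ are the sets $T\cup\{v\}$ with $v\notin T$ and $f(T\cup\{v\})=\mu$. Each expansion of $T$ uses at most $n$ oracle calls, i.e.\ $O(nT_f)$ time, so each newly discovered minimizer is produced within $O(nT_f)$ time. To turn this into a genuine $O(nT_f)$-delay guarantee and to suppress duplicates, I would fix a canonical parent for every non-seed minimizer $S$---for instance $S\setminus\{v\}$ for the largest element $v$, under a fixed ordering of $V$, with $S\setminus\{v\}$ a minimizer, which exists by the structural lemma---and traverse the resulting forest rooted at the seeds in depth-first order, outputting each node when it is first visited and expanding only along canonical parent--child edges.

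The main obstacle I anticipate is not the structural lemma, which is elementary, but the bookkeeping needed for the worst-case delay bound: a naive breadth-first search that outputs a minimizer only after fully expanding it can spend $\Theta(nT_f)$ time on nodes that yield no new output, breaking the delay guarantee, so the canonical-parent forest (or an equivalent reverse-search scheme) is essential to charge each $O(nT_f)$ block of work to a distinct output. A secondary point worth stating carefully is that, because a singleton can never be locally minimal under the given definition, the singletons produced in Step~1 must be explicitly included among the seeds; omitting them would miss exactly those minimizers whose reduction chains bottom out at a single element.
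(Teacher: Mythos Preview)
Your approach is essentially the same as the paper's: compute the locally minimal minimizers via {\sc MinPosimodular}, observe that every minimizer reduces by single-element deletions to such a seed, and then enumerate by growing one element at a time. The paper's argument is terse and does not spell out the reverse-search/canonical-parent mechanism for the delay bound, nor does it explicitly treat the singleton seeds; your write-up is more careful on both points but follows the same line.
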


We next show that
the family ${\cal X}(f)$ of all extreme sets
can be obtained  as an application of {\sc MinPosimodular}.

Recall that 
a subset $X$ of $V$ is called {\em extreme} if
every nonempty proper subset $Y$ of $X$ satisfies $f(Y)> f(X)$. 
By definition, ${\cal X}(f)$ contains all singletons $\{v\}$, $v \in V$, 
and any extreme set $X$ with $|X|\geq 2$ is locally minimal.
This together with Lemma \ref{opt:lem}  implies that Algorithm {\sc MinPosimodular} checks all possible candidates for extreme sets.  
By the following simple observation,  we only check the extremeness among such candidates.

\begin{lemma}
If a family  $\cal Q  \subseteq 2^V$ contains all extreme sets of $f$, 
then $X \in Q$ is extreme for $f$ if and only if  any nonempty proper subset $Y$ of $X$ with $Y  \in {\cal Q}$ satisfies 
$f(Y) > f(X)$.  
\end{lemma}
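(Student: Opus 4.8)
The plan is to prove the two directions separately, with only the backward direction requiring any real work; notably, the argument is purely combinatorial and uses nothing about posimodularity beyond the definition of extreme sets. The forward direction is immediate: if $X$ is extreme, then by definition \emph{every} nonempty proper subset $Y$ of $X$ satisfies $f(Y) > f(X)$, and in particular this holds for those $Y$ that happen to lie in $\mathcal{Q}$.

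For the backward direction I would argue by contradiction. Suppose $X \in \mathcal{Q}$ satisfies the stated condition on its subsets lying in $\mathcal{Q}$, but $X$ is not extreme. Then the collection of nonempty proper subsets $Y \subsetneq X$ with $f(Y) \leq f(X)$ is nonempty, so I can choose an inclusion-minimal member $Y^*$ of this collection. The whole proof then rests on a single claim.

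The key step is to show that this minimal ``bad'' set $Y^*$ is itself extreme. Let $Z$ be an arbitrary nonempty proper subset of $Y^*$. Then $Z$ is also a nonempty proper subset of $X$ (since $Z \subsetneq Y^* \subsetneq X$), and $Z$ is strictly contained in $Y^*$; by the inclusion-minimality of $Y^*$, the set $Z$ cannot belong to the bad collection, i.e.\ $f(Z) > f(X) \geq f(Y^*)$, so $f(Z) > f(Y^*)$. Since $Z$ was arbitrary, every nonempty proper subset of $Y^*$ has strictly larger $f$-value, which is exactly the defining condition for $Y^*$ to be extreme.

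To finish, I would invoke the hypothesis that $\mathcal{Q}$ contains all extreme sets of $f$, which forces $Y^* \in \mathcal{Q}$. But then $Y^*$ is a nonempty proper subset of $X$ with $Y^* \in \mathcal{Q}$ and $f(Y^*) \leq f(X)$, directly contradicting the assumed condition that every such subset satisfies strict inequality. This contradiction shows $X$ must be extreme, completing the equivalence. I expect the extremeness of the minimal bad set $Y^*$ to be the only subtle point: the minimality is used precisely to guarantee that no proper nonempty subset of $Y^*$ can violate the strict inequality, and it is this step that lets us promote the weaker ``$\mathcal{Q}$-only'' test to a genuine certificate of extremeness.
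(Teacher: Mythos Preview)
Your proof is correct and follows essentially the same approach as the paper: both argue the nontrivial direction by passing from a ``bad'' witness $Y \subsetneq X$ with $f(Y)\le f(X)$ to an extreme set contained in it, which must then lie in $\mathcal{Q}$ and yield a contradiction. The paper phrases this by invoking directly that a non-extreme $Y$ contains some extreme $Z\subseteq Y$ with $f(Z)\le f(Y)$, whereas you make this step explicit by choosing $Y^*$ inclusion-minimal and verifying extremeness; the content is the same.
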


\begin{proof}
If some nonempty proper subset $Y$ of $X$ with $Y  \in {\cal Q}$ satisfies 
$f(Y) \leq f(X)$, then $X$ is not extreme for $f$. On the other hand, if $X$ is not extreme, then some nonempty proper subset $Y$ of $X$ satisfies $f(Y) \leq f(X)$. If $Y$ is not contained in ${\cal Q}$, 
then  $Y$ is not extreme for $f$, and hence there exists an extreme set $Z$ of $f$ 
such that $Z \subseteq Y$ and $f(Z)\leq f(Y)$. 
Note that this $Z$ is a nonempty proper subset of $X$ with  $f(Z) \leq f(X)$, 
which is contained in ${\cal Q}$.
\end{proof}

\noindent
{\bf Algorithm}  {\sc  ComputeExtremeSets($f$)}

\smallskip

\noindent
{\bf Step 1}. Let ${\cal X}:=\emptyset$ and let ${\cal Q}:=\{v \mid v \in V\} \cup\{ V \setminus {\sc  FCP}(\varphi_f(\overline{x}); I) \mid I \subseteq V, |I|\leq d \}$. 

\hspace{.92cm}/* Here all $f(X)$, $X \in {\cal Q}$ are assumed to be stored.  */ 

\smallskip 

\noindent
{\bf Step 2}.  For each $X \in {\cal Q}$ do 

\hspace*{1.49cm} If all nonempty $Y \in {\cal Q}$ with $Y \subsetneq X$
satisfy $f(Y) > f(X)$, then ${\cal X}:={\cal X} \cup \{X\}$. 

\hspace{.92cm}Output ${\cal X}$ (as ${\cal X}(f)$) and halt. \qed

\med

\noindent
Similarly to Algorithm {\sc MinPosimodular}, Step 1 requires 
  $O(n^{d}T_f+n^{2d+1})$ time. 
Moreover, by $|{\cal Q}|= O(n^d)$, Step 2 can be executed in  $O(n^{2d+1})$ time. 

In summary, we have the following result.

\begin{corollary}
For a posimodular function $f:2^V\to \{0,1,\dots, d\}$, 
we can compute the family ${\cal X}(f)$ of all extreme sets of $f$ in $O(n^{d}T_f+n^{2d+1})$ time.
\end{corollary}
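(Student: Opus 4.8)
The plan is to establish the correctness of Algorithm {\sc ComputeExtremeSets}($f$) and then bound its running time. The crux of the correctness argument is a single claim: the candidate family $\cal Q$ assembled in Step 1 contains \emph{every} extreme set of $f$. Once this is in hand, the extremeness test of Step 2 is justified by the preceding lemma, and the time bound is essentially bookkeeping.

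First I would verify $\cal Q\supseteq{\cal X}(f)$. Each singleton $\{v\}$ is extreme (vacuously, having no nonempty proper subset) and is placed in $\cal Q$ explicitly. For an extreme set $X$ with $|X|\geq 2$, the defining inequality $f(Y)>f(X)$ for all nonempty proper $Y\subsetneq X$ gives in particular $f(X)<f(X\setminus\{v\})$ for every $v\in X$, so $X$ is locally minimal. By Lemma~\ref{opt:lem} its characteristic vector satisfies $\varphi_f$, equivalently the complementary assignment satisfies $\varphi_f(\overline{x})$. By Lemma~\ref{fcp:cl} this complementary assignment equals ${\sc FCP}(\varphi_f(\overline{x});I)$ for some $I$ independent of $\cal U$, and Lemma~\ref{unreach2:lem}$(ii)$ forces $|I|\leq d$; hence $X=V\setminus{\sc FCP}(\varphi_f(\overline{x});I)$ is precisely one of the sets enumerated in Step 1. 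Thus $\cal Q$ contains all extreme sets.

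With $\cal Q\supseteq{\cal X}(f)$ in place, the preceding lemma yields directly that an $X\in\cal Q$ is extreme if and only if every nonempty proper subset $Y\in\cal Q$ of $X$ satisfies $f(Y)>f(X)$, which is exactly the condition Step 2 checks. For the running time I would reuse the analysis of {\sc MinPosimodular}: $\cal U$ is computed in $O(n^dT_f+n^{d+1})$ time, each of the $O(n^d)$ evaluations ${\sc FCP}(\varphi_f(\overline{x});I)$ with $|I|\leq d$ costs $O(d|{\cal U}|)=O(n^{d+1})$, and storing the $O(n^d)$ values $f(X)$, $X\in\cal Q$, costs $O(n^dT_f)$, so Step 1 runs in $O(n^dT_f+n^{2d+1})$ time and $|{\cal Q}|=O(n^d)$. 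Since the stored function values make every comparison in Step 2 free of oracle calls, Step 2 performs $O(|{\cal Q}|^2)=O(n^{2d})$ pairwise tests, each an $O(n)$-time subset check, for $O(n^{2d+1})$ time. Summing gives the claimed $O(n^dT_f+n^{2d+1})$ bound.

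The step I expect to be the main obstacle is the completeness claim $\cal Q\supseteq{\cal X}(f)$ of the second paragraph: this is where the structural content resides, since it converts the global condition of extremeness into local minimality and then transports local minimality through the dual Horn encoding $\varphi_f$ and its {\sc FCP}-based enumeration to land inside the polynomial-size family $\cal Q$. The remaining ingredients---the iff characterization supplied by the preceding lemma and the routine accounting for the time bound---are then straightforward.
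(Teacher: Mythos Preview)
Your proposal is correct and follows essentially the same approach as the paper: show that every extreme set is either a singleton or locally minimal, hence (via Lemmas~\ref{opt:lem}, \ref{fcp:cl}, and \ref{unreach2:lem}$(ii)$) lies in $\cal Q$, then invoke the preceding lemma to justify the extremeness test of Step~2, and finally reuse the running-time analysis of {\sc MinPosimodular}. If anything, you spell out the chain of lemmas behind ``$\cal Q$ contains all locally minimal sets'' more explicitly than the paper does.
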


\section{Posimodular function maximization}\label{max-sec}
\setcounter{equation}{0}
In this section, 
we consider the posimodular function maximization defined as follows. 

\begin{equation}\label{posi2:prob}
\begin{array}{ll}
\multicolumn{2}{l}{\hspace*{-.10cm} 
{\mbox {\sc Posimodular Function Maximization}}}
\\[.08cm] 
\hspace*{-.10cm}
\mbox{Input:}& \mbox{A posimodular function}\,\,f: 2^V \to \mathbb{R}_+,
 \\[.08cm]
\hspace*{-.10cm}
\mbox{Output:}& \mbox{A nonempty subset }X\,\,\mbox{of }V \mbox{
maximizing }f.
\end{array} 
\end{equation}

\noindent
Here we assume that the optimal value $f(X^*)$ is also output.  
Similarly to the posimodular function minimization, the problem (\ref{posi2:prob}) is in general intractable.

\begin{theorem}\label{max:th}
Any algorithm for the posimodular function maximization requires 
at least $2^{n-1}$ oracle calls.
\end{theorem}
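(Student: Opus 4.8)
The plan is to prove the bound by an adversary (indistinguishability) argument in exactly the style of Lemma~\ref{lemma--1} and Theorem~\ref{hardness1:th}, but now tuned so that the relevant covering/counting comes out to $2^{n-1}$. Concretely, I would exhibit a family $\mathcal{G}_{\max}$ of posimodular functions together with a distinguished baseline $h$ such that (i) every member of $\mathcal{G}_{\max}$ has a \emph{different} maximizer from $h$, and (ii) any single oracle call is consistent with the baseline $h$ and with all but at most one member of $\mathcal{G}_{\max}$. Given (i)--(ii), the argument of Lemma~\ref{lemma--1} applies verbatim: run the algorithm on input $h$; if it makes fewer than $|\mathcal{G}_{\max}|$ calls, then some member $f\in\mathcal{G}_{\max}$ is not eliminated by any answer, so all the algorithm's queries return the baseline value and $f$ is indistinguishable from $h$, yet $f$ and $h$ have distinct maximizers, contradicting correctness. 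Hence at least $|\mathcal{G}_{\max}|$ calls are required, and the goal is to make $|\mathcal{G}_{\max}|=2^{n-1}$.

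The first step is to pin down the baseline. A monotone function such as $h(X)=|X|$ (monotone functions are posimodular, since $X\setminus Y\subseteq X$ and $Y\setminus X\subseteq Y$ give $h(X)+h(Y)\ge h(X\setminus Y)+h(Y\setminus X)$) has $V$ as its unique maximizer, which is the natural object to ``hide away from.'' The second and decisive step is to design, for each of $2^{n-1}$ indices $W$ (e.g. the subsets $W\subseteq V$ containing a fixed element $v^*$, or one representative per complementary pair $\{W,V\setminus W\}$), a posimodular perturbation $f_W$ of $h$ whose maximizer is forced to be some set other than $V$, arranged so that the sets on which $f_W$ disagrees with $h$ are \emph{pairwise disjoint} across distinct $W$. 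Disjointness of the disagreement regions is exactly what guarantees property~(ii): a query lying in the private region of $f_W$ can reveal only $f_W$, and a query outside all regions reveals nothing, so each answer rules out at most one candidate and the $2^{n-1}$ candidates force $2^{n-1}$ queries.

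The main obstacle — and the bulk of the work — is constructing these $f_W$ and verifying posimodularity through the defining inequality~(\ref{posi:eq}), in the spirit of the case analysis in Claim~\ref{claim--1}. Here I would flag a genuine subtlety: one cannot simply spike the value of a single set $W$ above the baseline, because choosing $X=W\cup\{u\}$ and $Y=\{u,u'\}$ with $u,u'\notin W$ yields $X\setminus Y=W$ on the right-hand side of~(\ref{posi:eq}) and violates posimodularity whenever $|V\setminus W|\ge 2$; thus only sets of size $\ge n-1$ admit an isolated spike, which is far too few. Consequently each perturbation must modify a small \emph{structured} family of sets (not a singleton) chosen so that the right-hand-side terms of~(\ref{posi:eq}) never force a contradiction, while still keeping the modified regions disjoint across different $W$ so the counting stays clean. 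Establishing that such a structured, mutually disjoint, posimodularity-preserving family of size $2^{n-1}$ exists — and checking the inequality~(\ref{posi:eq}) for every pairing $X,Y$ exactly as in Claim~\ref{claim--1} — is where the real effort lies; once it is in place, the adversary counting gives the claimed $2^{n-1}$ lower bound immediately.
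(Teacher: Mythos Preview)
Your adversary framework is exactly right, but the proposal stops precisely where the actual content lies: you never construct the family $\{f_W\}$, and the obstacle you correctly diagnose (a single-set spike over the baseline $h(X)=|X|$ fails posimodularity for all but the very large $W$) is self-inflicted by your choice of baseline. With $h(X)=|X|$ the ``plateau'' near the top is a single point $V$, so any perturbation that creates a new maximizer has to touch many sets, and there is no evident way to carve out $2^{n-1}$ \emph{disjoint} modification regions of that kind inside a universe of $2^n$ sets. As written, the plan has a genuine gap: the existence of such a structured, mutually disjoint, posimodularity-preserving family of size $2^{n-1}$ is asserted but not established, and it is not clear it can be.

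The paper's proof removes the obstacle at the source by changing the baseline rather than the perturbation. For $n=2k$ it takes $g(X)=\min\{|X|,k\}$ (still monotone, hence posimodular), so that \emph{every} set of size at least $k$ already sits on a flat top of value $k$. Then for each $S$ with $|S|\ge k$ one defines $g_S$ to agree with $g$ everywhere except at the single set $S$, where $g_S(S)=k+1$. Posimodularity of $g_S$ survives because if $X\setminus Y=S$ then $|Y\setminus X|\le n-|S|-1\le k-1$, which forces $g(Y)>g(Y\setminus X)$ and compensates the unit increase on the other side of~(\ref{posi:eq}); this is the one-line case check you were hoping to avoid, and with the capped baseline it really is one line. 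Now the disagreement region of each $g_S$ is the singleton $\{S\}$, these are trivially disjoint, there are $\sum_{i\ge k}\binom{n}{i}\ge 2^{n-1}$ of them, and $g$ and $g_S$ have different optimal values ($k$ versus $k+1$). Your Lemma~\ref{lemma--1}-style counting then gives the bound immediately. The odd-$n$ case is handled the same way with the cap at $k+1$.

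In short: keep your adversary template, but replace $h(X)=|X|$ by the capped function $g(X)=\min\{|X|,\lfloor n/2\rfloor\}$; then single-set spikes work and the $2^{n-1}$ bound falls out without any ``structured family'' construction.
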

\begin{proof}
Let us first consider the case in which $n$ is even, i.e.,   $n=2k$ for some positive integer $k$. 
Let $g :2^V \to \mathbb{R}_+$ be a function defined by
$g(X)=|X|$ if $|X| \leq k-1$, and $g(X)=k$ otherwise,
and  for a subset $S \subseteq V$ with $|S| \geq k$,
define a function  $g_S :2^V \to 
\mathbb{R}_+$ by
$g_S(X)=g(X)$ if $X\not=S$, and $g_S(X)=k+1$ if $X=S$.
Since $g$ is monotone,  it is  posimodular.
We claim that  $g_S$ is also posimodular.

Note that $g_S(Z)\geq g(Z')$ holds for any pair of subsets $Z$ and $Z'$ with $Z \supseteq  Z'$ except for $Z'=S$. 
Let $X$ and $Y$ be two subsets of $V$ with $X\cap Y \neq \emptyset$. 
In order to check the posimodular inequality  (\ref{posi:eq}), we can assume  
that $S=X  \setminus Y$ or $Y \setminus X$, since all the other cases can be proven easily. 
By symmetry, let  $S=X\setminus Y$.
Then we have $g_S(X)=k$, $g_S(X\setminus Y)=k+1$, and since $|Y \setminus X| \leq n-k-1 = k-1$, 
$g_S(Y) >g_S(Y\setminus X)$ holds.
These imply the posimodular inequality.

Let $q=\sum_{i=k}^n{n \choose i} ~(\geq 2^{n-1})$.
Assume that there exists an algorithm $A$ for the posimodular
function maximization
which requires oracle calls smaller than $q$. 
Let $\cal X$ denote the family of subsets of $V$ which are called by $A$ if a posimodular function $g$ is an input of $A$. 
Since $|{\cal X}| \leq q-1$, we have a subset $S$ such that $S \not\in {\cal X}$ and $|S|\geq k$. 
This implies that $g_S(X)=g(X)$ for all $X \in {\cal X}$, which contradicts that Algorithm $A$ distinguishes between $g$ and $g_S$ (i.e., $A$ cannot know if the optimal value is either $k$ or $k+1$).

Next let us  consider the case in which $n$ is odd, i.e.,   $n=2k+1$ for some nonnegative integer $k$. 
Let $g :2^V \to \mathbb{R}_+$ be a function defined by
$g(X)=|X|$ if $|X| \leq k$, and $g(X)=k+1$ otherwise,
and  for a subset $S \subseteq V$ with $|S| \geq k+1$,
define a function  $g_S :2^V \to 
\mathbb{R}_+$ by
$g_S(X)=g(X)$ if $X\not=S$, and $g(X)=k+2$ if $X=S$.
In a similar way to the previous case, 
we can observe that
at least 
$\sum_{i=k+1}^n
 {n \choose i} \geq 2^{n-1}$ oracle calls are required to solve the posimodular function maximization. 
\end{proof}

\nop{
If we make use of posimodular functions $c \cdot g$ and $c \cdot g_S$ for some positive $c$, instead of $g$ and $g_S$, 
we can derive the following inapproximable result. 
\begin{corollary}
For any positive $c$, any $c$-approximation algorithm for the posimodular function maximization  requires 
at least $2^{n-1}$  oracle calls.
 \end{corollary}
}

Next consider the case where $f: 2^V \to \{0,1,\ldots,d\}$
for a nonnegative integer $d$.
Then we have the following tight result for the posimodular function maximization.  

\begin{theorem}\label{max2:th}
The posimodular function maximization for $f: 2^V \to \{0,1,\ldots,d\}$ with a constant $d$
can be solved in $\Theta(n^{d-1}T_f)$ time. 
\end{theorem}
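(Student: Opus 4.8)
The plan is to establish matching upper and lower bounds, since the statement is a $\Theta$-bound. For the upper bound I would give an algorithm that evaluates $f$ on only $O(n^{d-1})$ subsets; for the lower bound I would exhibit a family of range-$D$ posimodular functions on which $\Omega(n^{d-1})$ oracle calls are unavoidable, reusing the adversary-and-covering scheme of Lemmas~\ref{lemma--1} and~\ref{lemma--2}. Throughout I may assume the optimum value $M=\max_{X}f(X)$ satisfies $M\ge 1$, since $M=0$ forces $f\equiv 0$ by (\ref{eq--1}) and then any nonempty set is optimal.

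For the upper bound the engine is a monotone-type consequence of posimodularity: applying (\ref{posi:eq}) to $X=T\cup\{v\}$ and $Y=\{v\}$ gives $f(T\cup\{v\})\ge f(T)-f(\{v\})$ for any maximizer $T$ and any $v\notin T$. In particular every $v$ with $f(\{v\})=0$ can be absorbed into a maximizer without decreasing its value, so there is a maximizer $T^{*}$ containing all such elements, and $V\setminus T^{*}$ consists only of elements with $f(\{v\})\ge 1$. The key structural lemma I would prove is that a suitably chosen maximizer has a \emph{small certificate}: its complement is generated, by forward chaining against the posimodular implications (in the spirit of Lemma~\ref{key:lem} and the dual-Horn rules (\ref{key3:eq})), from a seed of at most $d-1$ elements. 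The budget $d-1$ should come from a strictly $f$-decreasing chain $T^{*}=Z_{0}\subsetneq Z_{1}\subsetneq\cdots\subsetneq Z_{p}=V$ witnessing that $T^{*}$ is co-reachable to $V$: since $f$ drops by at least $1$ at each step from $M\le d$ down to $f(V)\ge 0$, the complement satisfies $p\le M-f(V)$, and the absorption of the value-$0$ elements together with the fact that the optimum value $M$ itself consumes one unit of this budget shaves the exponent from $d$ to $d-1$. Granting the lemma, the algorithm precomputes $f$ on all sets of cardinality at most $d-1$ (cost $O(n^{d-1}T_f)$, after which the rules and their \textsc{FCP}-closures need no further oracle calls, exactly as in Remark~\ref{remark1}), enumerates all seeds $A$ with $|A|\le d-1$, reconstructs the associated candidate set, evaluates $f$ on it, and returns the best candidate (also testing all singletons and $V$, which is within budget for $d\ge 2$). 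There are $O(n^{d-1})$ seeds, giving total time $O(n^{d-1}T_f)$.

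For the lower bound I would design, around a hidden $(d-1)$-dimensional object (for instance a hidden small set $S$ on which $f$ is perturbed upward by one), a base posimodular function $g$ of range $D$ together with $\Omega(n^{d-1})$ perturbations $g_{S}$ that coincide with $g$ except on the sets that ``reveal'' $S$, each $g_{S}$ raising the optimum by $1$. As in Lemma~\ref{lemma--1}, an algorithm making too few calls on input $g$ cannot separate $g$ from some $g_{S}$ on its query set, and a covering/LP-duality estimate in the style of Lemma~\ref{lemma--2} then forces $\Omega(n^{d-1})$ calls. Verifying posimodularity of each $g_{S}$ (mirroring Claim~\ref{claim--1}) and tuning the parameters so that the number of perturbations is exactly of order $n^{d-1}$ are the essential, if routine, bookkeeping steps here.

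The main obstacle is the structural lemma itself: proving that posimodularity forces \emph{some} maximizer to admit a forward-chaining certificate of size at most $d-1$, and in particular that this maximizer is co-reachable to $V$ by a strictly $f$-decreasing chain. Unlike the minimization side—where reachable sets are automatically small because $f$ increases from $f(\emptyset)=0$, as in Lemma~\ref{unreach2:lem}—a maximizer need not be reachable, because $f$ can be ``stuck'' on a plateau along some ascending chains; hence the decreasing-chain argument must be set up so that at least one admissible chain exists. It is precisely the single-element saving that converts a naive $n^{d}$ into $n^{d-1}$ (through the absorption of $f(\{v\})=0$ elements and the use of the known optimum value $M$) where tightness against the lower bound is won or lost, so I would verify that saving with the most care.
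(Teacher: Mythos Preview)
Your lower-bound sketch is in the right spirit and essentially what the paper does: it hides a single large set $S$ with $|S|\ge n-d+1$, sets $g_S(S)=d$, and the adversary argument forces $\sum_{i\ge n-d+1}\binom{n}{i}=\Omega(n^{d-1})$ queries (Lemma~\ref{lemma-max-1}).

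Your upper bound, however, has a genuine gap. You propose to port the dual-Horn/FCP machinery of Section~\ref{dconst-sec} to maximization, the crux being that some maximizer $T^{*}$ has its complement generated by forward chaining from a seed of size at most $d-1$. You do not prove this, and the justification you offer does not work. First, the strictly $f$-decreasing chain $T^{*}=Z_{0}\subsetneq\cdots\subsetneq Z_{p}=V$ is not supplied by posimodularity: what the paper can extract (Lemma~\ref{max1:lem}) is a strictly \emph{increasing} chain on $V\setminus T^{*}$ starting from $\emptyset$, which only yields $|V\setminus T^{*}|\le d$ (Corollary~\ref{cor-max1}), not a decreasing chain on supersets of $T^{*}$. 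Second, even granting such a chain, you would get $p\le M\le d$, not $p\le d-1$; your proposed saving via ``absorption of $f(\{v\})=0$ elements'' fails in the critical case (assumption~(\ref{assumption:eq})), where $|V\setminus T^{*}|=d$ and every singleton in $V\setminus T^{*}$ may have $f$-value exactly $1$, so nothing is shaved. Third, the rules of Section~\ref{dconst-sec} are tailored to \emph{locally minimal} sets via Lemma~\ref{key:lem}; a maximizer has no reason to satisfy $\varphi_f$, and you do not specify what implications would replace (\ref{key3:eq}). Finally, even your precomputation budget is off: the rules there come from minimal unreachable sets of size up to $d+1$, so knowing $f$ only on sets of size $\le d-1$ would not suffice to build them.

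The paper's route is entirely different and does not touch the Horn machinery. After the $O(n^{d-1}T_f)$ scan of all $X$ with $|X|\ge n-d+1$, either a maximizer is found or (\ref{assumption:eq}) holds. Under (\ref{assumption:eq}) the paper proves structural facts (Lemmas~\ref{max2:lem}--\ref{max2-3:lem}): there exist two \emph{disjoint} maximizers of size $d$; every $X$ with $|X|=f(X)=d-1$ disjoint from some size-$d$ maximizer extends by one element to a maximizer; and the size-$(d-1)$ sets that meet \emph{all} size-$d$ maximizers number only $O(n^{d-3})$. The algorithm scans all size-$(d-1)$ sets and, for $O(n^{d-3})$ of them, tries all one-element extensions, for $O(n^{d-1}T_f)$ total. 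Thus the $d\to d-1$ saving comes from the pair of disjoint small maximizers, not from any seed-size or chain argument.
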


The following lemma shows the lower bound for the posimodular function maximization, 
where the upper bound will be shown in the next subsection. 

\begin{lemma}
\label{lemma-max-1}
The posimodular function maximization for $f: 2^V \to \{0,1,\ldots,d\}$ requires 
 $\Omega(n^{d-1})$ oracle calls, if $n \geq 2d-2$. 
\end{lemma}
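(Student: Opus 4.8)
The plan is to prove the lower bound by an adversary (indistinguishability) argument in the spirit of the proof of Theorem~\ref{max:th}. I will exhibit a base posimodular function $g$ together with a family $\{g_S \mid S \in \mathcal{C}\}$ of posimodular functions, each obtained from $g$ by raising its value on a single set $S$, so that (i) every $g_S$ has range contained in $\{0,1,\ldots,d\}$, (ii) the family $\mathcal{C}$ has size $\Theta(n^{d-1})$, and (iii) a single oracle call can distinguish $g$ from $g_S$ only if that call queries exactly $S$. Given (i)--(iii), any algorithm making fewer than $|\mathcal{C}|$ calls on input $g$ leaves some $S^{*}\in\mathcal{C}$ unqueried, hence behaves identically on $g$ and on $g_{S^{*}}$; but the two have different optimal values, a contradiction. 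This yields the bound $\Omega(n^{d-1})$.

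Concretely, I would set $g(X)=\min\{|X|,\,d-1\}$, which is monotone and therefore posimodular, with range $\{0,1,\ldots,d-1\}$ and maximum value $d-1$ attained on every set of size at least $d-1$. For the perturbed functions I take $\mathcal{C}=\{S\subseteq V \mid |V\setminus S|=d-1\}$ (equivalently $|S|=n-d+1$), so that $|\mathcal{C}|=\binom{n}{d-1}=\Theta(n^{d-1})$ for constant $d$, and define $g_S$ by $g_S(X)=g(X)$ for $X\neq S$ and $g_S(S)=d$. The hypothesis $n\geq 2d-2$ is used exactly here: it guarantees $|S|=n-d+1\geq d-1$, so $g(S)=d-1$ and the single bump to $d$ keeps the range inside $\{0,1,\ldots,d\}$ while creating a unique maximizer $S$ of value $d$.

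The key steps, in order, are: first, verify that each $g_S$ is posimodular; second, run the indistinguishability argument. For the first step I would observe that, relative to $g$, the inequality $g_S(X)+g_S(Y)\geq g_S(X\setminus Y)+g_S(Y\setminus X)$ can only be endangered when $S$ appears on the right-hand side without appearing on the left, i.e.\ (by symmetry) when $S=X\setminus Y$ and $X\cap Y\neq\emptyset$; in all other cases the left side gains at least as much as the right from the single bump, so posimodularity follows from that of $g$. In the critical case, write $A=X\cap Y$, which is nonempty, and $C=Y\setminus X$. Since $S\cap Y=\emptyset$, both $A$ and $C$ are disjoint subsets of the small set $V\setminus S$, whence $|A\cup C|\leq d-1$ and $g$ is in its linear regime there, while $g(X)=d-1$ because $|X|\geq|S|\geq d-1$; a direct computation then gives
\[
g_S(X)+g_S(Y)-g_S(X\setminus Y)-g_S(Y\setminus X)=(d-1)+|A\cup C|-d-|C|=|A|-1\geq 0,
\]
so $g_S$ is posimodular. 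For the second step, a query $X$ satisfies $g_S(X)\neq g(X)$ only for $X=S$; hence if an algorithm makes fewer than $\binom{n}{d-1}$ calls on input $g$, some $S^{*}\in\mathcal{C}$ is never queried, $g_{S^{*}}$ agrees with $g$ on all queried sets, and the algorithm cannot separate the optimum $d-1$ of $g$ from the optimum $d$ of $g_{S^{*}}$.

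The main obstacle is the posimodularity verification in the first step, and in particular the choice of the size of $S$. A naive attempt with small sets (say $|S|=d-1$) fails: there the complementary piece $Y\setminus X$ can be large, landing on the plateau of $g$ where the base slack vanishes and the $+1$ bump breaks the inequality. Forcing $V\setminus S$ to be small (of size exactly $d-1$) is what confines $A$ and $C$ to the linear regime and leaves slack $|A|\geq 1$ to absorb the bump; this is simultaneously what pins $|\mathcal{C}|$ to $\binom{n}{d-1}$ and what makes the constraint $n\geq 2d-2$ necessary for $g(S)=d-1$ to hold.
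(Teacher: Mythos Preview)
Your proposal is correct and follows essentially the same approach as the paper: the same base function $g(X)=\min\{|X|,d-1\}$, the same single-bump perturbations $g_S$ with $|S|\geq n-d+1$ (you restrict to $|S|=n-d+1$, which already gives $\binom{n}{d-1}=\Omega(n^{d-1})$), and the same posimodularity check in the critical case $S=X\setminus Y$. Your computation of the slack $|A|-1\geq 0$ is a slightly more explicit version of the paper's observation that $|Y\setminus X|\leq d-2$ forces $g_S(Y)>g_S(Y\setminus X)$, but the content is identical.
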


\begin{proof}
Let $g :2^V \to \{0,1,\ldots,d\}$ be a function defined by
$g(X)=|X|$ if $|X| \leq d-2$, and $g(X)=d-1$ otherwise.
For a subset $S \subseteq V$ with $|S| \geq n-d+1~(\geq d-1)$,
define a function  $g_S :2^V \to 
\{0,1,\ldots,d\}$ by
$g_S(X)=g(X)$ if $X\not=S$, and $g_S(X)=d$ if $X=S$.
Since $g$ is monotone, it is posimodular.
We claim that  $g_S$ is also posimodular.

Note that $g_S(Z)\geq g(Z')$ holds for any pair of subsets $Z$ and $Z'$ with $Z \supseteq  Z'$ except for $Z'=S$. 
Let $X$ and $Y$ be two subsets of $V$ with $X\cap Y \neq \emptyset$. 
In order to check the posimodular inequality  (\ref{posi:eq}), we can assume  
that $S=X  \setminus Y$ or $Y \setminus X$, since all the other cases can be proven easily. 
By symmetry, let  $S=X\setminus Y$.
Then we have $g_S(X)=d-1$, $g_S(X\setminus Y)=d$, and since $|Y \setminus X| \leq n-|S|-1 \leq d-2$ and $|Y| > |Y \setminus X|$,  
$g_S(Y) >g_S(Y\setminus X)$ holds.
These imply the posimodular inequality.

In a similar way to the proof of Theorem~\ref{max:th},
we can observe that  the posimodular function maximization
requires
at least 
$\sum_{i=n-d+1}^n
 {n \choose i} = \Omega(n^{d-1})$
 oracle calls, to distinguish among $g$ and all $g_S$ with $|S|\geq n-d+1$.   
\end{proof}

\subsection{Polynomial time algorithm for a constant $d$}

In this section, we present an $O(n^{d-1}T_f)$-time algorithm for
the posimodular function maximization for a constant $d$. 

The following simple lemma implies that the problem can be solved in $O(n^{d}T_f)$ time.

\begin{lemma}\label{max1:lem}
Let $f:2^V\to \{0, 1,\dots , d\}$ be a posimodular function, 
and let $S$ be a maximal maximizer of $f$ $($i.e., a maximizer such that no proper superset is a maximizer of $f$$)$.
Then, $f(X \cup \{v\})\geq f(X)+1$ holds for any pair of a set  $X \subseteq V$ and an  element $v \in V$ such that  $X$, $\{v\}$ and $S$ are pairwise disjoint. 
\end{lemma}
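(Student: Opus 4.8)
The plan is to feed posimodularity a single, carefully chosen pair of sets and then exploit the maximality of $S$ together with the integrality of $f$. Write $m=f(S)=\max\{f(Z)\mid Z\subseteq V\}$. Since $S$ is a \emph{maximal} maximizer and $v\notin S$, the proper superset $S\cup\{v\}$ cannot itself be a maximizer, so $f(S\cup\{v\})<m$; because $f$ takes only values in $\{0,1,\dots,d\}$, this sharpens to $f(S\cup\{v\})\leq m-1$. This one-unit gap is exactly what will produce the ``$+1$'' in the conclusion.

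Next I would apply the posimodular inequality (\ref{posi:eq}) to the pair $A=S\cup\{v\}$ and $B=X\cup\{v\}$. The pairwise-disjointness hypothesis on $X$, $\{v\}$, and $S$ makes the two set-differences collapse cleanly: since $v$ lies in both $A$ and $B$ while $S\cap X=\emptyset$, one computes $A\setminus B=S$ and $B\setminus A=X$. Posimodularity then reads
\begin{equation*}
 f(S\cup\{v\})+f(X\cup\{v\})\geq f(S)+f(X)=m+f(X).
\end{equation*}
Note that I do not even need $A$ and $B$ to intersect for (\ref{posi:eq}) to apply, though here $A\cap B=\{v\}$ in any case.

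Finally I would rearrange and substitute the bound from the first step:
\begin{equation*}
 f(X\cup\{v\})\geq m+f(X)-f(S\cup\{v\})\geq m+f(X)-(m-1)=f(X)+1,
\end{equation*}
which is the claimed inequality. The only genuine decision in the argument is the choice of the pair $(A,B)$; once $A=S\cup\{v\}$ and $B=X\cup\{v\}$ are selected, the residual sets are forced to be precisely $S$ and $X$. I do not anticipate a real obstacle here — the whole subtlety lies in recognizing that maximality of $S$ plus integrality of the range upgrades the strict inequality $f(S\cup\{v\})<m$ into a gap of at least one, and in arranging the disjointness so that posimodularity leaves exactly $f(S)+f(X)$ on the right-hand side.
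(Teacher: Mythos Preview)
Your proof is correct and takes essentially the same approach as the paper: apply posimodularity to the pair $S\cup\{v\}$ and $X\cup\{v\}$, then use the maximality of $S$ to obtain $f(S\cup\{v\})<f(S)$, whence $f(X\cup\{v\})>f(X)$ and integrality gives the $+1$. You are simply more explicit than the paper in spelling out the set-difference computations and the role of integrality.
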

\begin{proof}
By (\ref{posi:eq}), we have
$f(X \cup \{v\})+f(S \cup \{v\})\geq f(X)+f(S)$.
By the maximality of $S$, we have $f(S\cup \{v\})<f(S)$.
Hence, we have $f(X \cup \{v\})>f(X)$.
\end{proof}

\begin{corollary}\label{cor-max1}
Let $f:2^V\to \{0, 1,\dots , d\}$ be a posimodular funiction.
Then we have $|S| \geq n-d$ for any  maximal maximizer $S$ of $f$. 
\end{corollary}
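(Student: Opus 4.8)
The plan is to derive Corollary~\ref{cor-max1} directly from Lemma~\ref{max1:lem} by a counting argument on the function values. Fix a maximal maximizer $S$ of $f$, and suppose for contradiction that $|S| \leq n-d-1$, so that $|V \setminus S| \geq d+1$. The idea is to build up a chain of subsets of $V \setminus S$ one element at a time and track how the value of $f$ increases along the way, using Lemma~\ref{max1:lem} to guarantee a strict increase at each step.

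Concretely, I would pick $d+1$ distinct elements $v_1, v_2, \ldots, v_{d+1} \in V \setminus S$ (possible since $|V \setminus S| \geq d+1$) and consider the nested sets $X_0 = \emptyset \subsetneq X_1 \subsetneq \cdots \subsetneq X_{d+1}$, where $X_i = \{v_1, \ldots, v_i\}$. For each $i$ with $0 \leq i \leq d$, the set $X_i$ and the singleton $\{v_{i+1}\}$ are disjoint from each other and from $S$, so Lemma~\ref{max1:lem} applies with $X = X_i$ and $v = v_{i+1}$, yielding $f(X_{i+1}) = f(X_i \cup \{v_{i+1}\}) \geq f(X_i) + 1$. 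Chaining these inequalities from $i=0$ to $i=d$ gives $f(X_{d+1}) \geq f(X_0) + (d+1) = f(\emptyset) + (d+1) = d+1$, using $f(\emptyset)=0$.

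This is the contradiction I am after: $f(X_{d+1}) \geq d+1$ is impossible since the range of $f$ is $\{0,1,\ldots,d\}$, so $f(X_{d+1}) \leq d$. Hence the assumption $|S| \leq n-d-1$ fails, and we conclude $|S| \geq n-d$.

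I do not anticipate a serious obstacle here, as the corollary is essentially a telescoping application of the lemma. The only point requiring a little care is checking the disjointness hypothesis of Lemma~\ref{max1:lem} at every step: $X_i \subseteq V \setminus S$ guarantees $X_i \cap S = \emptyset$, the element $v_{i+1} \notin X_i$ by the choice of distinct elements, and $v_{i+1} \in V \setminus S$ gives $\{v_{i+1}\} \cap S = \emptyset$, so $X_i$, $\{v_{i+1}\}$, and $S$ are indeed pairwise disjoint. The assumption $n \geq 2d-2$ from Lemma~\ref{lemma-max-1} is not needed for the corollary; all that is required is that $V \setminus S$ be large enough to supply the $d+1$ elements, which is exactly what the negation of the desired conclusion provides.
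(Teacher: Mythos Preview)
Your proof is correct and follows essentially the same approach as the paper: both build a chain $\emptyset = X_0 \subsetneq X_1 \subsetneq \cdots$ inside $V\setminus S$, apply Lemma~\ref{max1:lem} at each step to force $f(X_{i+1}) \geq f(X_i)+1$, and conclude from the range bound $f \leq d$ that the chain can have length at most $d$. The only cosmetic difference is that the paper argues directly (the chain has length $n-|S|$, hence $n-|S|\leq d$) while you phrase it as a contradiction by assuming $|V\setminus S|\geq d+1$.
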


\begin{proof}
Let $k=|S|$, and  let $X_0\,(=\emptyset) \subseteq X_1 \subseteq \dots  \subseteq X_{n-k}\,(=V\setminus S)$ be a chain with $|X_i|=i$ for all $i$. 
Then it follows from Lemma \ref{max1:lem} that 
\begin{equation}
\label{eq-max--00}
f(X_0)\,(=0) < f(X_1) < \dots  < f(X_{n-k}) \,(\leq d), 
\end{equation}
which implies that $n-k \leq d$. 
\end{proof}

\noindent
By the corollary, the posimodular function maximization can be solved in $O(n^{d}T_f)$ time by checking all subsets $X$ with $|X|\geq n-d$.

In the remaining part of this section, we reduce the complexity to $O(n^{d-1}T_f)$ by showing a series of lemmas which assumes that no maximizer of $f$ has size at least $n-d+1$, i.e., by Corollary \ref{cor-max1} and (\ref{eq-max--00}),
\begin{equation}\label{assumption:eq}
\mbox{any maximal maximizer $X^*$ of $f$ satisfies $|X^*|=n-d$ and $f(X^*)=d$.} 
\end{equation}
By (\ref{eq-max--00}), it implies that $n\geq 2d$.

\begin{lemma}\label{max2:lem}
Under the assumption $(\ref{assumption:eq})$, we have the following two statements. 
\begin{description}
\setlength{\parskip}{0cm} 
\setlength{\itemindent}{-0.7cm}
\item[$(${\it i}$)$] For any  maximizer  $S$ of $f$, there exists a maximizer
$S'$ of $f$ with $S' \cap S=\emptyset$ and $|S'|=d$.
\item[$(${\it ii}$)$] 
 Let $S_1,S_2$ be two maximizers of $f$ with $S_1 \cap S_2=\emptyset$.
 Then, there exist two maximizers $X_1,X_2$ of $f$
with $|X_1|=|X_2|=d$ and $X_i \subseteq S_i$, $i=1,2$.
Moreover, any subset $Y \subseteq V$ with 
$X_1 \subseteq Y \subseteq V  \setminus X_2$ or
$X_2 \subseteq Y \subseteq V  \setminus X_1$ is a maximizer of $f$. 
\end{description}
\end{lemma}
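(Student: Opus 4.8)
The plan is to derive both parts from a single ``sandwich engine'': for any two maximizers $A$ and $B$ (i.e.\ $f(A)=f(B)=d$) with $A\cap B=\emptyset$, every set $Y$ with $A\subseteq Y\subseteq V\setminus B$ is again a maximizer. I would prove this by induction on $|Y\setminus A|$, growing $A$ one element at a time. If $Z$ is a maximizer with $A\subseteq Z\subseteq V\setminus B$ and $v\in(V\setminus B)\setminus Z$, then applying posimodularity (\ref{posi:eq}) to $Z\cup\{v\}$ and $B\cup\{v\}$ gives $f(Z\cup\{v\})+f(B\cup\{v\})\ge f(Z)+f(B)=2d$; here $(Z\cup\{v\})\setminus(B\cup\{v\})=Z$ and $(B\cup\{v\})\setminus(Z\cup\{v\})=B$, using $Z\cap B=\emptyset$ and $v\notin B$. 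Since $f(B\cup\{v\})\le d$, this forces $f(Z\cup\{v\})=d$, completing the induction. I would stress that this engine uses only that $A$ and $B$ attain the value $d$, not that they have size $d$; once it is in hand, the ``Moreover'' assertion of (ii) is immediate by instantiating it at $(A,B)=(X_1,X_2)$ and at $(A,B)=(X_2,X_1)$.

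For (i) I would extend $S$ to a maximal maximizer $\hat S$, which satisfies $|\hat S|=n-d$ by the standing assumption (\ref{assumption:eq}), and set $S'=V\setminus\hat S$, a set of size $d$ disjoint from $S$. Building a chain $\emptyset=Y_0\subsetneq Y_1\subsetneq\cdots\subsetneq Y_d=S'$ inside $V\setminus\hat S$ and invoking Lemma~\ref{max1:lem} at each step yields $f(Y_i)\ge f(Y_{i-1})+1$, so $f(S')=d$; thus $S'$ is the desired size-$d$ maximizer disjoint from $S$. This is essentially the increasing-chain computation already underlying (\ref{eq-max--00}) and Corollary~\ref{cor-max1}.

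The heart of (ii) is producing a size-$d$ maximizer that actually lies \emph{inside} $S_i$, and this containment is the step I expect to be the main obstacle: a maximal maximizer extending $S_i$ can absorb elements outside $S_i$, and in general one cannot build an increasing chain confined to $S_i$ (a maximizer disjoint from $S_i$ need not have a maximal extension disjoint from $S_i$ unless $|S_i|\le d$). My route around this is to pass to complements. Applying (i) to $S_i$ produces a size-$d$ maximizer $T$ with $T\cap S_i=\emptyset$; feeding $(A,B)=(T,S_i)$ into the engine with $Y=V\setminus S_i$ (note $T\subseteq V\setminus S_i$) shows $f(V\setminus S_i)=d$, so $V\setminus S_i$ is itself a maximizer. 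By (\ref{assumption:eq}) every maximizer is contained in a maximal one of size $n-d$ and hence has size at most $n-d$, whence $|S_i|\ge d$; moreover $V\setminus S_i$ extends to a maximal maximizer $M_i\supseteq V\setminus S_i$ of size $n-d$. Then $X_i:=V\setminus M_i$ has size $d$, satisfies $X_i\subseteq S_i$ (because $M_i\supseteq V\setminus S_i$), and is a maximizer by the chain argument of (i) applied to $M_i$. Finally $X_1\cap X_2=\emptyset$ follows from $S_1\cap S_2=\emptyset$, and the ``Moreover'' clause is exactly the engine at $(X_1,X_2)$ and $(X_2,X_1)$.
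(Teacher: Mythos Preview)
Your proof is correct and follows essentially the same architecture as the paper: for (i), extend $S$ to a maximal maximizer and take the complement; for (ii), show $V\setminus S_i$ is a maximizer, apply (i) to it to get $X_i\subseteq S_i$ of size $d$, and then establish the sandwich property via posimodularity.

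The one noteworthy difference is that where you build an inductive ``engine'' (growing one element at a time), the paper dispatches both uses with a single symmetric application of (\ref{posi:eq}). To show $V\setminus S_1$ and $V\setminus S_2$ are maximizers, it applies (\ref{posi:eq}) directly to the pair $(V\setminus S_1,\,V\setminus S_2)$, obtaining $f(V\setminus S_1)+f(V\setminus S_2)\ge f(S_2)+f(S_1)=2d$ in one stroke; this bypasses your detour through part (i) and the engine. Likewise, for the ``Moreover'' clause, given $Z\subseteq V\setminus(X_1\cup X_2)$ the paper applies (\ref{posi:eq}) once to $(X_1\cup Z,\,X_2\cup Z)$, whose set differences are exactly $X_1$ and $X_2$, yielding $f(X_1\cup Z)+f(X_2\cup Z)\ge 2d$ and hence both are maximizers simultaneously---no induction needed. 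Your engine is a perfectly valid and reusable abstraction, but the paper's symmetric-pair trick is shorter here because the two disjoint maximizers are always available to play off against each other.
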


\begin{proof}
({\it i}).  Let $S$ be an arbitrary  maximizer of $f$,
and $S_1$ be a maximal maximizer of $f$  with $S_1 \supseteq S$.
By 
 (\ref{assumption:eq}), we have $|S_1| = n-d$ and
hence $|V \setminus S_1|= d$.
It follows from (\ref{eq-max--00}) that $f(V \setminus S_1)=d$, which means that 
$V \setminus S_1$ is a maximizer of $f$ with size $d$ which is disjoint from $S$. 

({\it ii}).
Since
we have $f(V \setminus S_1)+f(V \setminus S_2) \geq f(S_1)+f(S_2)$
by (\ref{posi:eq}),
both $V \setminus S_1$ and $V \setminus S_2$ are also maximizers of $f$.
By  (\ref{assumption:eq}), we have $|V \setminus S_j|\leq n-d$ and $|S_j|\geq d$
for $j=1,2$.
By applying ({\it i}) to $V\setminus S_{j}$ ($j=1,2$), we obtain 
a maximizer $X_j \subseteq S_{j}$ with $|X_{j}|=d$. 
Here we note that $X_1 \cap X_2 =\emptyset$.
Moreover, 
for any set $Z \subseteq V \setminus (X_1 \cup X_2)$,
both $X_1 \cup Z$ and $X_2 \cup Z$ are also maximizers of $f$,
since
we have $f(X_1 \cup Z)+f(X_2 \cup Z)\geq f(X_1)+f(X_2)$
by (\ref{posi:eq}). 
This completes the proof. 
\end{proof}

\noindent

\begin{lemma}\label{max2-2:lem}
Assume that $(\ref{assumption:eq})$ holds.
Let $S$ be
 a maximizer  of $f$ with size $d$,  and
let $X$ be
 a subset of $V$ such that $|X|=f(X)=d-1$ 
 and $X \cap S=\emptyset$.
Then, 
there exists an element $v \in V \setminus (S\cup X)$ with
$f(X \cup \{v\})=d$.
\end{lemma}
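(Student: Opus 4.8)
The plan is to reduce the statement to a single application of posimodularity to the pair of sets $X\cup\{v\}$ and $T\cup\{v\}$, where $T$ is a suitably chosen maximizer disjoint from $X$. First I record the governing inequality: if $T$ is any maximizer of $f$ with $T\cap X=\emptyset$ and $v\notin T\cup X$, then applying (\ref{posi:eq}) to $X\cup\{v\}$ and $T\cup\{v\}$ (whose difference sets are exactly $X$ and $T$) gives $f(X\cup\{v\})+f(T\cup\{v\})\ge f(X)+f(T)=(d-1)+d=2d-1$. Since the range of $f$ is $\{0,\dots,d\}$, this shows that as soon as we can exhibit a maximizer $T\supseteq S$ disjoint from $X$ together with an element $v\notin T\cup X$ satisfying $f(T\cup\{v\})\le d-1$, we immediately obtain $f(X\cup\{v\})\ge d$, hence $f(X\cup\{v\})=d$, which is the desired conclusion (note $v\notin S$ because $S\subseteq T$).

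The naive choice $T=S$ need not work, since it may happen that $f(S\cup\{v\})=d$ for every $v\in V\setminus(S\cup X)$; forcing the ``good'' case is the main obstacle. The fix is to enlarge $S$ as much as possible: I take $T=S^{*}$ to be a maximizer of \emph{maximum cardinality} among all maximizers of $f$ that contain $S$ and are disjoint from $X$. This family is nonempty (it contains $S$) and finite, so $S^{*}$ exists. Now for \emph{every} $v\in V\setminus(S^{*}\cup X)$ the set $S^{*}\cup\{v\}$ again contains $S$, is disjoint from $X$, and is strictly larger than $S^{*}$; by the maximality of $S^{*}$ it cannot be a maximizer, so $f(S^{*}\cup\{v\})\le d-1$.

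It remains only to check that a candidate $v$ exists. By (\ref{assumption:eq}) every maximizer of $f$ is contained in a maximal maximizer of size $n-d$, so $|S^{*}|\le n-d$; together with $|X|=d-1$ and $S^{*}\cap X=\emptyset$ this yields $|S^{*}\cup X|\le n-1$, whence $V\setminus(S^{*}\cup X)\neq\emptyset$. Choosing any $v$ in this set and feeding $T=S^{*}$ into the inequality of the first paragraph gives $f(X\cup\{v\})\ge 2d-1-(d-1)=d$, i.e.\ $f(X\cup\{v\})=d$, with $v\notin S\cup X$, completing the proof. The only nonroutine ingredient is the extremal choice of $S^{*}$; the surrounding estimates follow at once from $|X|=d-1$ and the two-set posimodular inequality.
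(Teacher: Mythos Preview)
Your proof is correct and takes a genuinely different route from the paper's.

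The paper argues \emph{dually}: it fixes a size-$d$ maximizer $S'$ disjoint from $S$ that minimizes $|S'\setminus X|$, picks $v\in S'\setminus X$, and applies posimodularity to $X\cup\{v\}$ and the complement $V\setminus(X\cup(S'\setminus\{v\}))$, whose set differences are $X$ and $V\setminus(X\cup S')$. This requires Lemma~\ref{max2:lem}(ii) to certify that $V\setminus(X\cup S')$ is itself a maximizer, and then a contradiction using Lemma~\ref{max2:lem}(i) and the minimality of $|S'\setminus X|$ to force $f(V\setminus(X\cup(S'\setminus\{v\})))\le d-1$.

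Your argument is shorter and more self-contained. By taking $S^{*}$ to be a maximum-cardinality maximizer containing $S$ and disjoint from $X$, the maximality of $S^{*}$ \emph{directly} forces $f(S^{*}\cup\{v\})\le d-1$ for every $v\notin S^{*}\cup X$, so a single application of (\ref{posi:eq}) to $X\cup\{v\}$ and $S^{*}\cup\{v\}$ finishes. You never invoke Lemma~\ref{max2:lem}; only the cardinality bound $|S^{*}|\le n-d$ from (\ref{assumption:eq}) is used, to guarantee that some candidate $v$ exists. The trade-off is that the paper's choice yields $v$ lying in a specific size-$d$ maximizer $S'$, which might be useful for finer control elsewhere, whereas your $v$ is just some element of $V\setminus(S^{*}\cup X)$; for the present lemma this extra information is irrelevant, and your extremal choice is the cleaner one.
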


\begin{proof}
Let $S'$ be a maximizer of $f$
with $S' \cap S=\emptyset$ and $|S'|=d$ such that
$|S' \setminus X|$ is the minimum.
We note that such an  $S'$ always exists by Lemma~\ref{max2:lem} ({\it i}), and 
 $S' \setminus X \neq \emptyset$ is satisfied by $|S'|>|X|$. 
Moreover, it follows from  Lemma~\ref{max2:lem} ({\it ii}) that 
$V \setminus (X\cup S')$ is also a maximizer of $f$.  
For $v \in S' \setminus X$,
we have
$f(X \cup \{v\})+f(V \setminus(X \cup (S'\setminus \{v\})))$
$\geq f(X)+f(V \setminus(X \cup S'))=2d-1$ 
by (\ref{posi:eq}). 
Therefore, it suffices to show that
$f(V \setminus(X \cup (S'\setminus
 \{v\})))\leq d-1$ to prove $f(X \cup \{v\})=d$. 

Assume to the contrary that $f(V \setminus(X \cup (S'\setminus
 \{v\})))=d$.
By Lemma~\ref{max2:lem} ({\it i}), there exists a maximizer  $S''$ of $f$
with $|S''|=d$ and
 $S'' \cap (V \setminus(X \cup (S'\setminus
 \{v\})))=\emptyset$, i.e., $S'' \subseteq X \cup (S'\setminus
 \{v\})$, which contradicts the minimality of $|S' \setminus X|$.
\end{proof}

\noindent
We remark that $S$ and $X$ in Lemma \ref{max2-2:lem}
always exist if  $(\ref{assumption:eq})$ is satisfied. 
In fact, by
Lemma \ref{max2:lem}, we have two maximizers $X_1$ and $X_2$ of $f$ such that 
$|X_1|=|X_2|=d$, $X_1 \cap X_2=\emptyset$, and $V\setminus X_2$ is also a maximizer of $f$.    
Let $S=X_1$ and $X=X_2 \setminus \{v\}$ for any $v \in X_2$.
Then $S$ satisfies the condition  in Lemma \ref{max2-2:lem}, and 
since $V\setminus X_2$ is a maximal maximizer of $f$,  
 (\ref{eq-max--00}) implies that $X$ also satisfies the condition in Lemma \ref{max2-2:lem}. 

\begin{lemma}\label{max2-3:lem}
 Let ${\cal X}$ be the family of all subsets $X$ of $V$ 
such that $|X|=d-1$ and $X \cap S\neq \emptyset$ for all maximizers $S$ of $f$ with $|S|=d$.
Then, under the assumption
$(\ref{assumption:eq})$,
 we have $|{\cal X}| =O(n^{d-3})$.
\end{lemma}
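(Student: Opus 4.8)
The plan is to bound $\mathcal{X}$ by the number of $(d-1)$-element sets that are transversals of merely \emph{two} size-$d$ maximizers, rather than of the whole family. First I would invoke Lemma~\ref{max2:lem} (as spelled out in the remark following Lemma~\ref{max2-2:lem}, which uses assumption $(\ref{assumption:eq})$) to fix two \emph{disjoint} maximizers $M_1,M_2$ of $f$ with $|M_1|=|M_2|=d$. Since every $X\in\mathcal{X}$ satisfies $|X|=d-1$ and $X\cap S\neq\emptyset$ for \emph{every} size-$d$ maximizer $S$, in particular every $X\in\mathcal{X}$ meets both $M_1$ and $M_2$. Hence $\mathcal{X}$ is contained in the family of $(d-1)$-subsets of $V$ that intersect both $M_1$ and $M_2$, and it suffices to count this (larger) family.

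Next I would count these sets by inclusion--exclusion on the two ``missing'' events $X\cap M_1=\emptyset$ and $X\cap M_2=\emptyset$. A $(d-1)$-set misses $M_1$ precisely when it lies in $V\setminus M_1$ (of size $n-d$), and misses both precisely when it lies in $V\setminus(M_1\cup M_2)$ (of size $n-2d$, which is nonnegative since $n\ge 2d$ by $(\ref{eq-max--00})$). This gives that the number of $(d-1)$-sets meeting both $M_1$ and $M_2$ equals
\[
{n \choose d-1}-2{n-d \choose d-1}+{n-2d \choose d-1},
\]
and therefore $|\mathcal{X}|$ is at most this quantity.

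Finally I would show that this expression is $O(n^{d-3})$, which is the crux of the argument. Writing $P(m)={m \choose d-1}$ as a polynomial in $m$ of degree $d-1$ with leading coefficient $1/(d-1)!$, the count $P(n)-2P(n-d)+P(n-2d)$ is a second finite difference with step $d$. Expanding $(n-a)^{j}$ for $a\in\{0,d,2d\}$, the coefficients of $n^{j}$ and of $n^{j-1}$ cancel (they sum to $1-2+1=0$ and to $0+2d-2d=0$, respectively), so the two highest-degree terms vanish and the surviving polynomial in $n$ has degree at most $d-3$; hence the count, and with it $|\mathcal{X}|$, is $O(n^{d-3})$. The genuine insight here is the reduction to just two disjoint size-$d$ maximizers, since it is exactly the symmetric ``$-2$'' in the middle term that forces the double cancellation; once that is in place the degree drop is routine. (The bound degenerates gracefully for small $d$: for $d=2$ the expression is identically $0$, consistent with the fact that a singleton cannot meet two disjoint sets, and for $d=3$ it equals $|M_1|\cdot|M_2|=9=O(1)$.)
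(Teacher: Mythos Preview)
Your proof is correct and follows essentially the same approach as the paper: both arguments invoke Lemma~\ref{max2:lem} to fix two disjoint size-$d$ maximizers and then bound $|\mathcal{X}|$ by the number of $(d-1)$-sets meeting both of them. The only difference is in the counting of that family---the paper enumerates directly by the pair $(|X\cap S_1|,|X\cap S_2|)$ and bounds the resulting sum, whereas you use inclusion--exclusion and a second-finite-difference argument to get the same $O(n^{d-3})$ degree drop; both are routine once the reduction to two disjoint maximizers is in place.
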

\begin{proof}
 By Lemma~\ref{max2:lem}, there exist two maximizers $S_1$ and $S_2$ of $f$
with $|S_1|=|S_2|=d$ and $S_1 \cap S_2 =\emptyset$.
Clearly, $|{\cal X}|$ is bounded by
the number of sets $X$ with size $d-1$ with
$X \cap S_1, X \cap S_2 \not=\emptyset$,
which is 
\[\sum_{i,j>0, i+j\leq d-1}{d \choose i}{d \choose j}{n-2d
 \choose d-1-i-j}\leq \sum_{k=2}^{d-1}{2d \choose k}
{n-2d \choose d-1-k}
=O(n^{d-3}). \]
\end{proof}

Let $c$ be a constant such that $|{\cal X}| \leq cn^{d-3}$ for 
${\cal X}$ in Lemma \ref{max2-3:lem}.  
Based on these lemmas, we can find a maximizer of $f$ in the following manner:

\med

\noindent
{\bf Algorithm}  {\sc  MaxPosimodular($f$)}

\smallskip

\noindent
{\bf Step 1}.  Find a subset $X_1$ of $V$ such that $|X_1|\geq n-d+1$ and
 $f(X_1)=\max\{f(X) \mid X\subseteq V, |X| \geq n-d+1 \}$. 
If $f(X_1)=d$, then output $X_1$ and halt.

\smallskip

\noindent
{\bf Step 2}. Find a subset $X_2$ of $V$ such that $|X_2|= d-1$ and
 $f(X_2)=\max\{f(X) \mid X\subseteq V, |X| = d-1 \}$. 
If $f(X_2)= d$, then output $X_2$ and halt. 
If $f(X_2)\leq d-2$, then output $X_1$ and halt.

\smallskip

\noindent
{\bf Step 3}. 
Choose $\min\{cn^{d-3}+1, |{\cal X}_1|\}$ members $X$ from 
  ${\cal X}_1=\{X \subseteq V \mid |X|=d-1, f(X)=d-1\}$. 
For each such $X$, if 
$f(X \cup \{v\})=d$ for some $v \not\in X$, 
then output $X \cup \{v\}$ and halt. 

\smallskip

\noindent
{\bf Step 4}.  Output $X_1$  and halt.

\begin{lemma}\label{posimax:lem}
Algorithm {\sc  MaxPosimodular($f$)} solves the posimodular function minimization for $f:2^V\to \{0,1,\dots , d\}$ for a constant $d$ in $O(n^{d-1}T_f)$ time. 
\end{lemma}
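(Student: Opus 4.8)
The plan is to verify that Algorithm {\sc MaxPosimodular($f$)} is correct and runs within the claimed time bound, organizing the argument by first establishing correctness and then analyzing the running time. The key structural observation is that, by Corollary~\ref{cor-max1}, every maximal maximizer has size at least $n-d$, so the optimal value is either attained by a large set (size $\geq n-d+1$) or we are in the regime governed by the assumption (\ref{assumption:eq}). First I would dispose of the easy exits: if Step~1 finds $f(X_1)=d$, then $X_1$ is clearly optimal since $d$ is the maximum possible value of $f$; similarly, if Step~2 finds $f(X_2)=d$, we are done. The interesting case is when neither of these succeeds, so that no set of size $\geq n-d+1$ achieves value $d$. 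By Corollary~\ref{cor-max1} and (\ref{eq-max--00}) this forces the situation of (\ref{assumption:eq}), namely that every maximal maximizer has size exactly $n-d$ and value exactly $d$.

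Next I would handle the remaining branches under the standing assumption (\ref{assumption:eq}). If $f(X_2)\leq d-2$ in Step~2, I claim $X_1$ (which has $f(X_1)=d-1$ since no large set reaches $d$) is optimal: the point is that an optimal maximizer $S$ has size $d$, and by Lemma~\ref{max2:lem}(i) and (ii) a maximizer of value $d$ and size $d$ would force the existence of sets of size $d-1$ with value $d-1$, contradicting $f(X_2)\leq d-2$; hence the maximum value is $d-1$, achieved by $X_1$. The crux is Step~3. Here $f(X_2)=d-1$, so ${\cal X}_1$ is nonempty. By Lemma~\ref{max2-3:lem} there are at most $cn^{d-3}$ sets of size $d-1$ that meet every size-$d$ maximizer; therefore if $|{\cal X}_1| \geq cn^{d-3}+1$, at least one chosen $X\in{\cal X}_1$ must be disjoint from some size-$d$ maximizer $S$, and Lemma~\ref{max2-2:lem} guarantees an element $v$ with $f(X\cup\{v\})=d$, so Step~3 correctly reports an optimal set of value $d$. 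Conversely, if Step~3 finds no such extension among the chosen sets, then (using Lemma~\ref{max2-2:lem} in contrapositive, every $X\in{\cal X}_1$ meets every size-$d$ maximizer, forcing $|{\cal X}_1|\leq cn^{d-3}$ so that we in fact examined all of ${\cal X}_1$) no maximizer of value $d$ exists, and Step~4 correctly outputs $X_1$ of value $d-1$.

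For the running time, I would count oracle calls step by step. Step~1 examines all subsets of size $\geq n-d+1$, of which there are $\sum_{i=n-d+1}^{n}{n\choose i}=O(n^{d-1})$, each costing $T_f$. Step~2 examines all $O(n^{d-1})$ subsets of size $d-1$. Step~3 inspects at most $cn^{d-3}+1=O(n^{d-3})$ candidate sets $X$, and for each it tries all $n$ extensions $X\cup\{v\}$, giving $O(n^{d-2})$ oracle calls; gathering ${\cal X}_1$ reuses the values already computed in Step~2. Thus every step costs $O(n^{d-1}T_f)$, and the total is $O(n^{d-1}T_f)$, matching the lower bound of Lemma~\ref{lemma-max-1} and completing the proof of Theorem~\ref{max2:th}.

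The step I expect to be the main obstacle is the correctness of Step~3, specifically the clean logical pivot on the threshold $cn^{d-3}+1$: I must argue carefully that examining this many members of ${\cal X}_1$ is both sufficient (by the pigeonhole-type bound of Lemma~\ref{max2-3:lem}, any set beyond the threshold is guaranteed to be disjoint from some size-$d$ maximizer and hence extendable via Lemma~\ref{max2-2:lem}) and that failure to extend certifies optimality of value $d-1$. The subtlety is that when $|{\cal X}_1|$ is small we must have checked everything, whereas when $|{\cal X}_1|$ is large the threshold guarantees a success; marshalling Lemmas~\ref{max2:lem}, \ref{max2-2:lem}, and \ref{max2-3:lem} together to make this dichotomy airtight is where the real work lies.
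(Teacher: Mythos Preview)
Your overall approach matches the paper's---the same algorithm, the same supporting lemmas---and your time analysis is correct. But the correctness argument has a real gap: you mishandle the possibility that $\max f < d$. The assertion ``by Corollary~\ref{cor-max1} and (\ref{eq-max--00}) this forces the situation of (\ref{assumption:eq})'' is false; if $\max f < d$ then (\ref{assumption:eq}) simply does not hold, regardless of $f(X_1)$. You then try to recover in the $f(X_2)\leq d-2$ branch by deriving a contradiction with (\ref{assumption:eq}), but you go on to assert that ``$f(X_1)=d-1$'' and ``the maximum value is $d-1$'', neither of which follows---the maximum could be any value $\leq d-1$. The same unjustified ``$X_1$ of value $d-1$'' appears in your Step~4 branch. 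What is missing is the paper's clean observation: if $\max f = d' < d$, then (\ref{eq-max--00}) applied with $d'$ in place of $d$ gives $|S|\geq n-d' \geq n-d+1$ for every maximal maximizer $S$, so $X_1$ already attains the maximum, whatever that value is. The paper avoids your tangle by splitting first on whether $\max f = d$ or $\max f < d$; in the latter case $X_1$ is immediately certified optimal (at Step~2 or Step~4), and only in the former does one invoke (\ref{assumption:eq}) and argue Step~3 must succeed.

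There is also a smaller gap in your Step~3 analysis under (\ref{assumption:eq}). You argue the pigeonhole for $|{\cal X}_1|\geq cn^{d-3}+1$, but when $|{\cal X}_1|\leq cn^{d-3}$ (so all of ${\cal X}_1$ is examined) you still need that \emph{some} $X\in{\cal X}_1$ is disjoint from a size-$d$ maximizer; this is exactly the content of the remark following Lemma~\ref{max2-2:lem} (take $X_1\setminus\{v\}$, where $X_1,X_2$ are the disjoint size-$d$ maximizers supplied by Lemma~\ref{max2:lem}). Your ``Conversely'' paragraph tries to wrap this up via a contrapositive of Lemma~\ref{max2-2:lem} but never supplies this existence statement, so the deduction ``no maximizer of value $d$ exists'' remains unsupported.
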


\begin{proof}
Let us first prove the correctness of the algorithm. 
Let $S$ be a maximal maximizer of $f$.
Assume that $f(S)=d$ holds. 
Then Corollary \ref{cor-max1} implies that 
$|S|\geq n-d$. 
If $|S|\geq n-d+1$, then $S$ can be found in Step 1.
On the other hand, if $|S|= n-d$,  then 
we have (\ref{assumption:eq}). 
 By the discussion after Lemma \ref{max2-2:lem}, $f(X_2)\geq d-1$ must hold.
If $f(X_2)=d$, then $X_2$ is clealy a maximizer of $f$ which is output in Step 2. 
Otherwise (i.e., $f(X_2)=d-1$), 
by Lemma \ref{max2-2:lem} together with the discussion after Lemma \ref{max2-2:lem}, 
for each subset $X$ with $|X|=f(X)=d-1$, we only check  if $f(X \cup \{v\})=d$ for some $v \not\in X$. 
Moreover, it follows from Lemma \ref{max2-3:lem} that we only check 
at most $cn^{d-3}+1$ many such $X$. 
Therefore, in this case, Step 3 correctly outputs a maximizer of $f$. 

Assume next that  $f(S) \leq d-1$. 
Then Algorithm {\sc  MaxPosimodular($f$)}  output $X_1$ in Step 2 or 4, which is correct, since  there exists a maximal maximizer of size at least $n-d+1$ by Corollary \ref{cor-max1}. 

As for the time complexity of Algorithm {\sc  MaxPosimodular($f$)}, we see that 
Steps 1 and 2 can be executed in $O(n^{d-1}T_f)$ time. 
Since Steps 3 and 4 respectively require $O(n^{d-2}T_f)$ and $O(n)$ time, 
in total, algorithm requires $O(n^{d-1}T_f)$ time. 
\end{proof}

\begin{remark}
\rm
For  ${\cal X}$  defined in Lemma~\ref{max2-3:lem},
we have $|{\cal X}|=O(n^{d-2})$
 if $d=O(\sqrt{n})$.
As observed in the proof of Lemma~\ref{posimax:lem}, 
the time complexity of Algorithm {\sc  MaxPosimodular($f$)}
is $O((n^{d-1}+n|{\cal X}|)T_f)$.
Hence, it follows that the posimodular function maximization
has
 time complexity  $\Theta(n^{d-1}T_f)$ even for  $d=O(\sqrt{n})$.
\end{remark}

\med
\med

\noi
 {\bf Acknowledgments:}
We would like to express our thanks to
S. Fujishige, M. Gr{\"{o}}tschel, and S. Tanigawa
 for their helpful comments.
This research was partially supported by the Scientific
Grant-in-Aid   from  Ministry of Education, Culture, 
Sports, Science and Technology of Japan.

\bibliographystyle{siam}    

\bibliography{./ishii_ref}

\end{document}